\documentclass{article}
\usepackage{orcidlink}
\usepackage{fullpage}
\usepackage{authblk}
\usepackage{amsmath, amsthm, amsfonts,  amssymb, latexsym,mathrsfs}
\usepackage{graphicx} 
\usepackage{bm}
\usepackage{pdflscape}
\usepackage{setspace}

\usepackage{tikz}
\usetikzlibrary{patterns}
\usetikzlibrary{shapes.geometric}
\usetikzlibrary{decorations.pathmorphing}
\usetikzlibrary{arrows.meta}
\tikzset{snake it/.style={decorate, decoration=snake}}
\usetikzlibrary{positioning}
\usetikzlibrary{arrows,shapes,positioning}
\usetikzlibrary{decorations.markings}
\tikzstyle arrowstyle=[scale=1]
\tikzstyle directed=[postaction={decorate,decoration={markings,mark=at position .65 with {\arrow[arrowstyle]{stealth}}}}]
\tikzstyle reverse directed=[postaction={decorate,decoration={markings,mark=at position .65 with {\arrowreversed[arrowstyle]{stealth};}}}]

\newtheorem{proposition}{Proposition}
\newtheorem{theorem}{Theorem}

\newtheorem{corollary}{Corollary}
\newtheorem{lem}{Lemma}

\newcommand{\Z}{\mathbb{Z}}

\newcommand{\cO}{{\mathcal{O}}}

\newcommand{\blambda}{{\mkern0.75mu\mathchar '26\mkern -9.75mu\lambda}}
\newcommand{\ud}{{\rm d}}
\newcommand{\del}{\partial}

\newcommand{\rin}{{\mathrm{in}}}
\newcommand{\rup}{{\mathrm{up}}}
\newcommand{\rout}{{\mathrm{out}}}
\newcommand{\rdown}{{\mathrm{down}}}

\DeclareMathOperator{\Real}{Re}
\DeclareMathOperator{\Imag}{Im}
\DeclareMathOperator{\Tr}{Tr}
\DeclareMathOperator{\diag}{diag}

\newcommand{\defeq}{\mathrel{:=}}

\newcommand{\btheta}{\boldsymbol{\theta}}

\renewcommand{\bm}{\boldsymbol{m}}

\newcommand{\ben}{\begin{equation}}
\newcommand{\een}{\end{equation}}
\renewcommand{\pounds}{{\mathscr L}}

\begin{document}

\author{Stefan Hollands \orcidlink{0000-0001-6627-2808}}
\affil{\small Institute for Theoretical Physics, Leipzig University, Br\" uderstrasse 16, 04103 Leipzig, and MPI-MiS, Inselstrasse 22, 04103, Leipzig, Germany, stefan.hollands@uni-leipzig.de}

\author{Akihiro Ishibashi \orcidlink{0000-0002-3485-9935}}
	\affil{\small Department of Physics and Kobayashi-Maskawa Institute, Nagoya University, Nagoya 464-8602, Japan 
}%

\author{Jochen Zahn}
\affil{\small Institute of Theoretical Physics, Leipzig University, Br\" uderstrasse 16, 04103 Leipzig}

\title{Hidden Symmetry of Kerr-deSitter from Manifest Symmetry of Painlevé VI}

\maketitle

\begin{abstract}
We show that the recently discovered ``mass symmetries" of the radial Teukolsky equation for Kerr-deSitter black holes can be understood 
from a corresponding symmetry of the Painlevé VI equation via the classical theory of ``isomonodromic deformations". It is known that a subset of the 
mass symmetries transforms the Teukolsky equation to a wave equation on a ``dual" Lorentzian spacetime. We find that this dual 
spacetime again represents a black hole, and use this insight to re-obtain a previously known partial mode stability result in a geometric manner. Further special geometric 
features of the dual spacetime are pointed out.
\end{abstract}

\section{Introduction}

The physical significance of the exact Kerr-Newman family of black hole solutions is primarily based on the uniqueness theorems (see e.g., \cite{Chrusciel:2012jk} for a review), and their stability against small perturbations (see e.g., \cite{Hintz:2022lgf} for a review). Although by far not sufficient for a rigorous proof of stability, a key step is to establish that the linearized Einstein-matter equations on the black hole spacetime under consideration have no (marginally) unstable quasinormal modes (QNMs), 
i.e. ones having a frequency with $\Imag \omega > \! (=) \, 0$. 

In Kerr, and also in several of its generalizations, QNMs can be analyzed within the Teukolsky framework of black hole perturbation theory \cite{Teukolsky:1973ha}. 
The first direct search for instability within this framework was carried out by \cite{Press:1973zz}. Considerably later, \cite{Whiting} proved mode stability (no QNMs with $\Imag \omega > 0$) for gravitational, EM and massless scalar perturbations of subextremal Kerr. The absence of QNMs with $\Imag \omega = 0$ was established for various spins by \cite{Shlapentokh-Rothman:2013hza}, \cite{Shlapentokh-Rothman:2020vpj}, \cite{Andersson:2016epf}. 
The extension to the extremal case is due to \cite{TeixeiradaCosta:2019skg}.

It is natural to ask whether mode stability holds in related settings. This is not necessarily the case; for example, one finds instability of the massive Klein-Gordon equation on Kerr for open set of mass parameters \cite{Shlapentokh-Rothman:2013ysa}, \cite{Dolan:2007mj}, or for perturbations of Kerr-Anti-deSitter spacetimes \cite{Carter} violating the Hawking-Reall bound \cite{Dold:2015cqa}, \cite{Green:2015kur}.
On the other hand, \cite{Dyatlov:2010hq} has established mode stability for the minimally coupled wave equation on slowly rotating Kerr-deSitter (Kerr-dS) by perturbative arguments, 
and \cite{Hintz} has established mode stability of the minimally coupled wave equation on Kerr-dS with sufficiently small cosmological constant.
For mode stability for ungauged gravitational perturbations of slowly rotating Kerr-dS see \cite{Hintz:2016gwb}. Going beyond perturbative results, \cite{CTdC} have excluded a 
certain range of unstable QNMs, though their results fall short of proving mode stability.  

In general, the essential technical difficulty in proving non-perturbative results about mode stability is the absence of a manifestly positive energy functional of the perturbations in many cases. 
For gravitational perturbations of Schwarzschild spacetimes, this issue was addressed by \cite{Regge:1957td}, \cite{Zerilli:1970se} via a transformation theory, which however could not be generalized to Kerr. The main reason is that Kerr possesses an ergoregion, giving rise to superradiant modes in the frequency range
\ben
\label{superboundKerr}
0 \le \omega/m \le \omega_+
\een
where $m$ describes the angular oscillation $e^{i m \phi}$ of the mode and where $\omega_+$ is the angular velocity of the event horizon. For frequencies in this window, 
there is a negative energy flux across the event horizon, corresponding to a negative definite contribution in the natural energy even of a massless scalar field perturbation, see e.g., 
\cite{waldbook}.

To get around this issue, \cite{Whiting} introduced a new ``hidden symmetry transformation'' of the radial Teukolsky equation---as well as less surprising differential transformations for the angular Teukolsky equation. The transformation \cite{Whiting} is ``hidden'', in the sense that 
it cannot be obtained by a change of variables or multiplicative renormalization of the Teukolsky equation. The transformed radial equation has the {\it same} spectrum of 
QNMs, and at the same time eliminates the problem with superradiant modes. A generalization of the transformations \cite{Whiting} to Kerr-dS was subsequently found by \cite{Umetsu}.
More recently, \cite{Hatsuda:2020sbn} has suggested a connection of these types of spectral symmetries with ``mass symmetries'', borrowing from a formal connection 
with supersymmetric quantum gauge theories. Even though the mass symmetries are of a similar nature as those 
derived earlier\footnote{\label{Umetsuissue} The final form of the transformed equation by \cite{Umetsu} differs from \cite{CTdC} and our results below. We suspect 
typographical errors in \cite{Umetsu}.} \cite{Umetsu}, this work provided a new perspective. Later, \cite{CTdC} mathematically established the mass symmetries
by a  method based on the Leaver \cite{Leaver:1985ax} and MST \cite{STU98} approaches to spectral problems for the Teukolsky equation.

Both the integral transformations by \cite{Umetsu} or the approach by \cite{CTdC} based on \cite{Leaver:1985ax}, \cite{STU98}  
arguably do not provide a particularly transparent understanding of the origin of the mass symmetries. In this paper, we therefore propose another derivation
that is based on the so-called isomonodromy method. In that method (see \cite{Jimbo:1981tov}, \cite{JimboMiwaII}, \cite{NMLC} as general references), 
one considers certain continuous deformation of  ordinary differential equations with rational coefficients, such as the 
radial Teukolsky equation on Kerr-dS. The deformation is designed so as to leave the monodromies of the equation invariant. Since the monodromies 
can be related to spectral information \cite{NMLC}, the flow can be said to be ``isospectral''. The flow of the coefficients of the differential equation is governed,
in the case of the radial Teukolsky equation studied here, by the Painlevé VI equation, since the radial Teukolsky 
equation on Kerr-dS is a Heun equation with four regular singular points  \cite{STU98}. 

Written in a suitable form \cite{Jimbo}, the Painlevé VI equation \eqref{sigma6}
is manifestly invariant under the mass symmetry transformations. In this paper, we show how to relate this invariance 
to QNMs, and we show that it leads---at least for real frequencies---to the same bounds as obtained by \cite{CTdC}, even though 
the final form of the radial Teukolsky equation after the mass symmetry appears to be somewhat different.

As observed by \cite{Whiting}, if one recombines the radial and angular Teukolsky equations after the symmetry transformations in Kerr, then the resulting equation for the {\it full} mode 
function is a wave equation on a new spacetime without ergoregions. That observation has been generalized to Kerr-dS by \cite{Umetsu}. With the aim of a better 
geometric understanding of the Kerr-dS mass symmetries, we investigate the geometry of this ``dual'' spacetime. Surprisingly to us, it again has the structure of a black hole spacetime
with bifurcate Killing horizons, with ``dual'' surface gravity and angular velocity parameters. 
Contrary to the Kerr case, there are still ergoregions. As we show, unlike the Teukolsky master equation 
on Kerr-dS, the wave equation on the dual spacetime admits a conserved current, and we use this current to give a geometric interpretation of the bound \cite{CTdC} on 
QNM frequencies on Kerr-dS (for real frequencies). It can be stated simply as
\ben
\label{improvesuperr}
\hat \omega_c \le \omega/m \le \hat \omega_+,
\een
where $\hat \omega_c, \hat \omega_+$ refer to the angular velocities of the cosmological- and event horizons in the {\it dual} spacetime. 

By comparison, the superradiant bound in the original Kerr-dS spacetime analogous to \eqref{superboundKerr} would be $\omega_c \le \omega/m \le \omega_+$.
Note that $\hat \omega_+ < \omega_+$ and $\hat \omega_c>\omega_c$, so the bound \eqref{improvesuperr} is sharper than the naive superradiant bound 
obtained by straightforward analysis in the {\it original} spacetime. 
In particular, for $\Lambda \to 0$ or for $a \to 0$, we get $\hat \omega_+, \hat \omega_c \to 0$ by \eqref{homegadef}, so one re-obtains the known results about the absence of 
real QNMs for the Kerr- and Schwarzschild-dS spacetimes.

This paper is organized as follows. In section \ref{sec:background} we recall the Kerr-dS spacetime and the Teukolsky equations for perturbations on Kerr-dS.  
In section \ref{sec:transformationtheory}, we present our transformation theory based on the isomonodromy method. In section \ref{sec:geometry} we investigate the 
geometry of the dual spacetime of Kerr-dS and the connections to the QNM bounds. Some related technical material is relegated to appendix \ref{app:B} and \ref{app:e}. 
In appendix \ref{app:A} we record some non-obvious curious facts that we found out about the geometry of the dual Kerr metric.

\medskip
Our {\bf notations and conventions} agree with those in \cite{waldbook}. Hatted quantities refer to the dual spacetime or transformed parameters under the mass symmetry.
 


\section{Kerr-dS and its linear perturbations}
\label{sec:background}

\subsection{Extended Kerr-dS spacetime}
The Kerr-dS metric \cite{Carter} can be presented in the  form \cite{ChambersMoss}
\begin{equation}
\label{BLKdS}
 \ud s^2 = - \frac{\Delta_r}{\rho^2 (1+\alpha)^2} \left( \ud t - a \sin^2 \theta \ud \phi \right)^2 + \frac{\Delta_\theta}{\rho^2 (1+\alpha)^2} \left[ a \ud t - (r^2 + a^2) \ud \phi \right]^2 + \rho^2 \left( \frac{\sin^2 \theta}{\Delta_\theta} \ud \theta^2 + \frac{\ud r^2}{\Delta_r} \right),
\end{equation}
with
\begin{align}
 \alpha & = a^2/L^2, & \rho^2 & = r^2 + a^2 \cos^2 \theta, \\
 \Delta_\theta & = \sin^2\theta(1 + \alpha \cos^2 \theta), & \Delta_r & = ( r^2 + a^2 ) (1 - r^2 / L^2 ) - 2 M r,
\end{align}
where $a,M,L$ are real parameters and $(\theta,\phi)$ are spherical polar coordinates on a 2-sphere. 
The coordinates $x^\mu=(t,r,\theta,\phi)$ are analogous to Boyer-Lindquist (BL) coordinates for the Kerr spacetime. 

Throughout this paper, we restrict our attention to the ``physical'' range of parameters such that  $\Delta_r$ has three real unequal positive roots, denoted by $0<r_-<r_+<r_c$,
\begin{equation}
\label{Deltardef}
    \Delta_r = -L^{-2}(r-r_+)(r-r_-)(r-r_c)(r-r_o).
\end{equation}
The fourth, negative, root is $r_o = - (r_- + r_+ + r_c)$. 

A priori, the BL coordinates cover the patch $r_+<r<r_c$ of the spacetime only, but this patch and \eqref{BLKdS} can be analytically continued, as described in detail in \cite{Akcay:2010vt, Borthwick:2018qsb}. Precisely in the range $0<r_-<r_+<r_c$, the extended spacetime describes a subextremal rotating black hole in an expanding universe, with specific angular momentum parameter $a$, mass parameter $M$, and Hubble radius $L$. The extended Kerr-dS manifold is ${\mathcal M} = {\mathcal P} \times S^2$, where $\mathcal P$ may be represented by a well-known \cite{Akcay:2010vt} Penrose diagram. For comparison with the ``dual metric'' presented later, we recall this conformal diagram in fig. \ref{fig:0}.

The extended spacetime is a solution to the vacuum Einstein equation with positive cosmological constant $\Lambda$, 
\begin{equation}
R_{ab}=\Lambda g_{ab},
\end{equation}
related to $L$ by $L^2 = \frac{3}{\Lambda}$. In the extended spacetime, the coordinate value $r_-$ corresponds to the inner bifurcate Killing horizon, $r_+$ to the outer (or event-) bifurcate Killing horizon, and $r_c$ corresponds to the cosmological bifurcate Killing horizon. There exist combinations of the $t$-translation Killing vector field (VF) $T^a=(\partial_t)^a$ and the rotation Killing VF $\Phi^a=(\partial_\phi)^a$ which are tangent and normal to the corresponding bifurcate Killing horizons in the extended spacetime,\footnote{Here the sign
$\pm_j$ depends on what portion (e.g., future/past) of the $r=r_j$ horizon one is looking at.}
\begin{equation}
\label{kjdef}
    K_j^a = T^a + \omega_j \Phi^a, 
    \quad K_j^a \nabla_a K_j^b = \pm_j \kappa_j K_j^b, \quad j \in \{+,-,c\}. 
\end{equation}
Here, $\kappa_j>0$ and $\omega_j$ are the surface gravities and angular velocities of the respective horizons, given by
\ben
\label{eq:wk_KdS}
\kappa_j=\frac{|\partial_r\Delta_r|_{r=r_j}}{2(1+\alpha) (r_j^2+a^2)}\, , \quad
 \omega_j = \frac{a}{r_j^2 + a^2}, \quad j \in \{-,+,c\}.
\een
		\begin{figure}
		\centering
			\begin{tikzpicture}[scale=1]
\draw (0,4) -- (-2,6) node[midway, below, sloped]{\footnotesize $r=r_-$};
\draw (-2,6) -- (-4,8) node[midway, below, sloped]{\footnotesize $r=r_-$};	
\draw (-2,6) -- (0,8) node[midway, below, sloped]{\footnotesize $r=r_-$};
\draw (-2,6) -- (-4,4) node[midway, below, sloped]{\footnotesize $r=r_-$};		
\draw (0,8) -- (-2,10) node[midway, below, sloped]{\footnotesize $r=r_+$};
\draw (-2,10) -- (-4,8) node[midway, below, sloped]{\footnotesize $r=r_+$};		
\draw (0,0) -- (-2,2) node[midway, below, sloped]{\footnotesize $r=r_+$};
\draw (-2,2) -- (-4,4) node[midway, below, sloped]{\footnotesize $r=r_+$};
\draw (0,0) -- (2,2) node[midway, below, sloped]{\footnotesize $r=r_c$};
\draw (2,2) -- (0,4) node[midway, below, sloped]{\footnotesize $r=r_c$};
\draw (-2,2) -- (0,4) node[midway, below, sloped]{\footnotesize $r=r_+$};
\draw[black, thick, dashed,->] (-7,2) -- (-6.5,2);
\draw[black, thick,dashed, ->] (-2,11.5) -- (-2,10.5);
\draw[black, thick,dashed] (-2,11.5) -- (-5,11.5);

\draw[black, thick,dashed] (-5,-2) -- (-5,-1.25);
\draw[black, thick,dashed] (-5,-0.75) -- (-5,.5);
\draw[black, thick,dashed] (-5,1.25) -- (-5,2.25);
\draw[black, thick,dashed] (-5,3.25) -- (-5,11.5);

\draw[black, thick,dashed] (-5,-2) -- (-2,-2) node[midway, below, black]{\footnotesize identify};
\draw[black, thick,dashed] (-2,-2) -- (-2,-1.25);
\draw[black, thick,dashed,->] (-2,-.75) -- (-2,1.5);
\draw[black, thick,dashed] (-7,2) -- (-7,-1);
\draw[black, thick,dashed] (-7,-1) -- (5,-1) node[midway, below, black]{\footnotesize identify};
\draw[black, thick,dashed] (5,-1) -- (5,2);
\draw[black, thick,dashed,->] (5,2) -- (2.5,2);
\draw[snake it] (0,4) -- (0,8) node[midway, below, sloped]{\footnotesize singularity};
\draw[snake it] (-4,4) -- (-4,8) node[midway, below, sloped]{\footnotesize singularity};
\draw[double=black] (0,4) -- (4,4) node[midway, below, sloped]{\footnotesize $r=\infty$};
\draw[double=black] (0,0) -- (4,0) node[midway, below, sloped]{\footnotesize $r=\infty$};
\draw (2,2) -- (4,4) node[midway, below, sloped]{\footnotesize $r=r_c$};
\draw (2,2) -- (4,0) node[midway, below, sloped]{\footnotesize $r=r_c$};
\draw (-2,2) -- (-4,0) node[midway, below, sloped]{\footnotesize $r=r_+$};
\draw (-4,0) -- (-6,2) node[midway, below, sloped]{\footnotesize $r=r_c$};
\draw (-4,4) -- (-6,2) node[midway, below, sloped]{\footnotesize $r=r_c$};

\draw[fill=white] (-4,4) circle (2pt);
\draw[fill=white] (-4,0) circle (2pt);
\draw[fill=white] (4,4) circle (2pt);
\draw[fill=white] (4,0) circle (2pt);
\draw[fill=white] (0,8) circle (2pt);
\draw[fill=white] (-4,8) circle (2pt);
\draw[fill=white] (0,0) circle (2pt) node[below]{$$};
\draw[fill=black] (-2,2) circle (2pt);
\draw[fill=black] (-6,2) circle (2pt);
\draw[fill=black] (-2,6) circle (2pt);
\draw[fill=black] (-2,10) circle (2pt);
\draw[fill=white] (0,4) circle (2pt) node[above right]{$$};
\draw[fill=black] (2,2) circle (2pt);
\end{tikzpicture}
\caption{Penrose diagram $\mathcal P$ of extended Kerr-dS \cite{Akcay:2010vt}. This spacetime could be further extended through a part of the singularity.}
		\label{fig:0}
		\end{figure}

The following limiting cases are well-known:
\begin{itemize}
    \item $a=0$ ($\Leftrightarrow r_-=0$): Schwarzschild-dS spacetime.
    \item $L=\infty$ ($\Leftrightarrow r_c=\infty$): Kerr spacetime.
    \item $r_+=r_-$: Extremal Kerr-dS spacetime.
    \item $a=M=0$ ($\Leftrightarrow r_+=r_-=0$): dS spacetime.
    \item $r_+=r_c$: rotating Nariai spacetime (in a scaling limit).
\end{itemize}

\subsection{Perturbations}

The Teukolsky master equation for gravitational, electromagnetic, chiral spinor, and massless, conformally coupled scalar field perturbations for $\Lambda>0$ was first obtained by \cite{Khanal:1983vb}. 
As for $\Lambda=0$, the equations can be separated, leading to 
an angular and a radial equation of potential barrier type for each spin, $s \in {\mathbb Z}/2$, through the ansatz
\begin{equation}
   \Psi(t,r,\theta,\phi) = 
    R(r) S(\theta) e^{-i\omega t+im\phi},
\end{equation}
where $m \in \mathbb{Z}$ and $\omega \in \mathbb{C}$.
For gravitational perturbations, $s=\pm 2$, and $\Psi$
is proportional to the top resp. bottom Newman-Penrose component of the linearized Weyl tensor, or alternatively the ingoing or outgoing radiation gauge Hertz potential. 

The spin $s$ radial equation is, see \cite{ChambersMoss}, 
\begin{equation}
\label{eq:Rmaster}
\left( -\Delta_r^{-s} \del_r \Delta_r^{s+1} \del_r + V_r  \right) R = 0,
\end{equation}
where 
\begin{align}
 W_r & = (1+\alpha) \left[ \omega (r^2 + a^2) - a m \right], \\
V_r & = 2 (s+1)(2s+1)r^2/L^2 + \blambda -s(1-\alpha) 
- \Delta_r^{-1} \left( W^2_r - i s W_r \partial_r \Delta_r \right)  -2 i s \partial_r W_r.
\end{align}
Here $\blambda$ is the angular separation constant, for which we adopt the same conventions as \cite{NMLC,Suzuki:1999nn,Umetsu}.\footnote{The separation constant used in
\cite{STU98} is instead $\lambda_{\rm STU} = \blambda+s(1+\alpha)$. The separation constant used by \cite{CTdC} is related by
\begin{equation}
\label{labla}
  \lambda_{\rm STU} = \lambda_{\rm CTdC} + s (1-\alpha) - 2 (1+\alpha)^2 a m \omega + (1+\alpha)^2 a^2 \omega^2.
\end{equation}
}
This separation constant $\blambda$ is determined by an eigenvalue problem for $S$, which for a given $s, m, \alpha$ and $a\omega$ is \cite{ChambersMoss}
\begin{equation}
\label{angop}
\left( -\partial_u \Delta_\theta \partial_u + V_\theta  \right) S = \blambda S
\end{equation}
in our conventions, where
\begin{align}
\label{angpot}
 W_\theta & = (1+\alpha) \left[ a\omega (1-u^2) -  m \right], \\
 V_\theta & =  2(2s^2+1)\alpha u^2 + \Delta_\theta^{-1} (W_\theta+s\partial_u \Delta_\theta/2)^2-2 s \partial_u W_\theta,
\end{align}
using $u=\cos \theta$. The boundary condition for this eigenvalue problem is that $S(u)$
be finite in both limits $u \to \pm 1$. In the case $s= \pm 2$, this geometrically corresponds to the regularity of the perturbed Weyl tensor or Hertz potential in a suitable Newman-Penrose tetrad that is regular at the poles $\theta = 0,\pi$. For real $\omega$ 
the angular operator \eqref{angop} is self-adjoint, so $\blambda$ is real.

It is standard to label the angular separation constant $\blambda \equiv {}_s \blambda_{\ell m}(a\omega,\alpha)$ and eigenfunctions for a given $s, m, \alpha$ and $a\omega$ by a natural number $\ell$ greater than or equal to the maximum of $|s|,|m|$, such that, when $\alpha=a\omega=0$, we have ${}_s\blambda_{\ell m}(a\omega=0,\alpha=0) = \ell(\ell+1) -s^2$ in our specific conventions.

\section{Transformation theory for radial Teukolski equation}
\label{sec:transformationtheory}

\subsection{Heun form of radial equation}
\label{Heunradialsec}

In this paper, Heun's differential equation is written as
\begin{equation}
\label{eq:HeunIsomonodromy}
 \partial_z^2 y(z) + \left( \frac{1-\eta_0}{z} + \frac{1-\eta_1}{z-1} + \frac{1-\eta_x}{z-x} \right) \partial_z y(z)
 + \left[ \frac{k_1(k_2+1)}{z (z-1)} - \frac{x (x-1) K_x}{z (z-1) (z-x)} \right] y(z) = 0 , 
\end{equation}
where $k_1, k_2$ are parameters such that $k_1+k_2 = -(\eta_0+\eta_1+\eta_x-1)$. See \cite{Ronveaux} for 
many facts about Heun's equation. The relation between $(K_x,\eta_0,\eta_1,\eta_x,k_1,k_2)$ 
and $(q_x, \alpha, \beta, \gamma, \delta)$ as used in \cite{Ronveaux} is linear and immediate.

As noticed in \cite{STU98}, the radial Teukolsky equation can be transformed to Heun form by a suitable substitution including a fractional linear change of the coordinate $r$ to a coordinate $z$, which maps $r_c$ to $z=0$, $r_+$ to $z=x$, $r_-$ to $z=1$, and $r_o$ to $z = \infty$, by
\begin{align}
\label{zr}
 z(r) & = z_\infty \frac{r-r_c}{r-r_o}, &
 z_\infty & = \frac{r_- - r_o}{r_- - r_c}, &
 x & = z_\infty \frac{r_+ - r_c}{r_+ - r_o}.
\end{align}
We furthermore define (our $\eta$'s deviate by a factor $2$ from the $\theta$'s of \cite{NMLC})
\begin{subequations}
\label{eq:etadef}
\begin{align}
 \eta_0 & = - i \frac{\omega - m \omega_c}{\kappa_c} + s, &
 \eta_x & = i \frac{\omega - m \omega_+}{\kappa_+} + s, \\
 \eta_1 & = - i \frac{\omega - m \omega_-}{\kappa_-} + s, &
 \eta_\infty & = i \frac{\omega - m \omega_o}{\kappa_o} + s.
\end{align}
\end{subequations}
After the substitution \cite{STU98}
\ben
 y(z) = z^{\frac{\eta_0}{2}+\frac{s}{2}} (z-1)^{\frac{\eta_1}{2}+\frac{s}{2}} (z-x)^{\frac{\eta_x}{2}+\frac{s}{2}} (z - z_\infty)^{-(2s+1)} R[r(z)],
\een
the radial equation turns into the Heun equation \cite{STU98}, in the form \eqref{eq:HeunIsomonodromy}
where $k_1, k_2$ are given by
\begin{align}
 k_1 & = 1 - 2 s, &
 k_2 & = \eta_\infty - 2 s.
\end{align}
$K_x$ is also conventionally called the ``accessory parameter''. For the radial equation, it is determined by $m$, $\omega$, $s$, the separation constant $\blambda$, and the parameters of the spacetime according to  
\begin{multline}
 K_x = \frac{(1-2s)(\eta_0 + \eta_x - 2 s)+4s}{2x} + \frac{(1-2s)(\eta_1 + \eta_x - 2 s)+4s}{2(x-1)} - \frac{(1 + s)(1 + 2s)}{x-z_\infty} \\
 + \frac{L^2 [ \blambda -s(1-\alpha) ] + 2 (1+s)(1+2s) r_+^2}{x(x-1)(r_c - r_-)(r_+ - r_o)} .
\end{multline}
We note the relation
\ben
\label{eq:sum_theta}
 \eta_0 + \eta_1 + \eta_x + \eta_\infty = 4 s,
\een
which is implicitly used in various manipulations below concerning the radial equation. 

\subsection{Boundary conditions for radial equation}
The parameters $\eta_i$, $i \in \{ 0, x, 1, \infty \}$, are the local monodromy coefficients for single loops around the corresponding pole, corresponding the difference of the Frobenius exponents. To be more precise, the behavior near $z_i \in \{ 0, x, 1 \}$ of solutions to \eqref{eq:HeunIsomonodromy} is a linear combination of solutions of the form
\ben
\label{eq:y_Asymptotics}
 (z- z_i)^{\eta_i} (1 + \cO(z-z_i)), \qquad 1 + \cO(z-z_i),
\een
whereas at $\infty$, the local behaviour is of the form
\ben
 z^{-1+2s} (1 + \cO(z^{-1})), \qquad z^{- 1 +2s - \eta_\infty} (1 + \cO(z^{-1})). 
\een
Here $\cO(z)$ stands for a function vanishing at $0$ and analytic in a neighborhood thereof.
We note that for $\eta_i \in \Z$, the more singular solution (the first of \eqref{eq:y_Asymptotics} for $\eta_i < 0$ and the second one for $\eta_i > 0$) receives a logarithmic correction. Furthermore, the less singular solution is hidden in the subleading tail of the leading one. This situation occurs only for an exceptional set of parameters and is not a concern because 
our main arguments use a continuity argument to extrapolate from exceptional cases. 

According to the above asymptotic formulas \eqref{eq:y_Asymptotics}, the local behavior of $R(r)$ near $r_+$ and $r_c$ is by \eqref{zr} a linear combination of 
\begin{subequations}
\label{Rbcs}
\begin{align}
 R^{(c)}_{\rout / \rin} & = (r_c - r)^{\pm \frac{\eta_0}{2}-\frac{s}{2}} (1 + \cO(r-r_c)) \propto e^{\pm i (\omega - m \omega_c ) r_*} e^{\kappa_c (\mp s + s) r_*} (1 + \cO(r-r_c)), \\
 R^{(+)}_{\rup / \rdown} & = (r - r_+)^{\pm \frac{\eta_x}{2}-\frac{s}{2}} (1 + \cO(r-r_+)) \propto e^{\pm i (\omega - m \omega_+ ) r_*} e^{\kappa_+ (\pm s - s) r_*}  (1 + \cO(r-r_+)),
\end{align}
\end{subequations}
using the conventional terminology ``in/out/up/down'' in black hole perturbation theory, where the tortoise coordinate $r_*$ is defined in the case of Kerr-dS by
\ben
 r_* = \frac{1}{2 \kappa_+} \log |r - r_+| - \frac{1}{2 \kappa_-} \log |r - r_-| - \frac{1}{2 \kappa_c} \log |r_c - r| 
\een
between $r_+$ and $r_c$.

Writing a general solution near $r = r_+$ and $r = r_c$ in terms of the above solutions with invertible ``connection matrices'' $C^{(+)}$, $C^{(c)}$, i.e.,
\begin{equation}
\label{eq:Phi_c_+}
    \Phi = \begin{pmatrix} R^{(c)}_\rout & R^{(c)}_\rin \end{pmatrix} C^{(c)} = \begin{pmatrix} R^{(+)}_\rup & R^{(+)}_\rdown \end{pmatrix} C^{(+)} ,
\end{equation}
we see that a quasi-normal mode (QNM, outgoing at the cosmological and downgoing at the event horizon) must fulfill
\begin{align}
 R_{QNM} & \propto \Phi (C^{(c)})^{-1} \begin{pmatrix} 1 \\ 0 \end{pmatrix}, &
 R_{QNM} & \propto \Phi (C^{(+)})^{-1} \begin{pmatrix} 0 \\ 1 \end{pmatrix},
\end{align}
which is consistent if and only if the two equivalent conditions
\begin{align}
\label{eq:QNM_c_+}
    C^{(c)} (C^{(+)})^{-1} & = \begin{pmatrix} * & * \\ * & 0 \end{pmatrix}, &
    C^{(+)} (C^{(c)})^{-1} & = \begin{pmatrix} 0 & * \\ * & * \end{pmatrix},
\end{align}
are fulfilled. For anti-quasinormal modes (AQNM, ingoing at the cosmological and upgoing at the event horizon), the right hand sides of these equations are interchanged. For further detail see \cite{Ronveaux}, section A.3.5, where QNMs are Heun functions of so-called type II with respect to the singularities $0$, $x$, while AQNMs are of so-called type III.

\subsection{Setup of isomonodromy method }
\label{sec:Setup}
Mass symmetry transformations of the spectrum of the Teukolsky equation on Kerr-dS were first suggested by \cite{Hatsuda:2020sbn}. 
In \cite{CTdC} they were established 
by a different method based on the Leaver \cite{Leaver:1985ax} and MST \cite{STU98} approaches to spectral problems for the Teukolsky equation.\footnote{See also \footnote{Umetsuissue}.}
Here we reconsider them using the so-called isomonodromy method, which we briefly describe first. 

The first step in the isomonodromy approach (see \cite{Jimbo:1981tov}, \cite{JimboMiwaII}, \cite{NMLC} as general references, which contain references to the historical background) is to consider a certain deformation
\begin{multline}
\label{eq:FuchsIsomonodromy}
  \partial_z^2 y(z) + \left( \frac{1-\theta_0}{z} + \frac{1-\theta_1}{z-1} + \frac{1-\theta_t}{z-t} - \frac{1}{z - q} \right) \partial_z y(z) \\
  + \left[ \frac{k_1 (k_2 + 1)}{z (z-1)} - \frac{t (t-1) H_x}{z (z-1) (z-t)} + \frac{q (q-1) p}{z (z-1) (z - q)} \right] y(z) = 0.
\end{multline}
of the Heun equation \eqref{eq:HeunIsomonodromy}, where $p=p(t), q=q(t)$, 
and $H_x=H_x(p(t),q(t),t)$ have a special dependence on the interpolation parameter, $t$, specified below. (Note that $t$ must not be confused with the BL coordinate $t$ in the Kerr-dS line element!) The constants $k_1, k_2$ are to be related to the $\theta$'s by
\ben
\label{eq:kappa_theta}
 k_1 + k_2 = - (\theta_0 + \theta_1 + \theta_t)
\een
and the Frobenius exponents are $0$ and $\theta_i$ for $i=0,1,t$ and $k_1$ and $k_2 + 1$ near $\infty$. $H_x$ is determined by demanding that solutions have no branch cuts at $z = q$---so that $z = q$ is only an ``apparent'' singularity in the standard terminology---which, as it turns out, is the case iff
\ben
\label{eq:K_constraint}
 H_x(p,q,t) = \frac{q (q - 1) (q - t)}{t (t-1)} \left[p^2 - \left( \frac{\theta_0}{q} + \frac{\theta_1}{q-1} + \frac{ \theta_t-1}{q-t} \right) p + \frac{k_1 (k_2 + 1)}{q (q-1)} \right].
\een 
The interpolated equation \eqref{eq:FuchsIsomonodromy} turns into the original Heun equation \eqref{eq:HeunIsomonodromy} in different limits provided that appropriate conditions are imposed onto $p, q$, and furthermore, if we impose that the interpolated equation have the same monodromies around the poles $z=0,t,1,\infty$ as the original equation. First of all, we should set $t = x$ in order to match the position of the non-apparent singularities. In order to get rid of the supplementary term $1/(z-q)$ in the coefficient of $y'$, we can either choose $q(t=x)$ to coincide with either $0, 1, t$ or consider $q \to \infty$. It turns out that the choice $q(t=x) = t = x$ is most useful. 
In order to match \eqref{eq:FuchsIsomonodromy} with \eqref{eq:HeunIsomonodromy} at $t=x$, we then have to choose \cite{CarneirodaCunha:2015qln}, \cite{daCunha:2015ana}
\begin{align}
\label{eq:theta_vartheta}
 \theta_0 & = \eta_0, &
 \theta_1 & = \eta_1, &
 \theta_t & = \eta_x - 1, &
  p & = - \frac{K_x}{\theta_t}.
\end{align}
The relation \eqref{eq:kappa_theta} is then satisfied both for the radial and angular equation for the respective choices of the $\eta$'s and $k$'s.

Next, we consider the monodromies of the interpolated equation \eqref{eq:FuchsIsomonodromy}, which one takes to remain unchanged under the flow in $t$ in the isomonodromy approach. This discussion is by far most transparent if one appropriately rewrites the single, second order differential equation \eqref{eq:FuchsIsomonodromy} as a component of a first order $2 \times 2$ matrix equation with rational coefficients having only first order poles at $z=0,1,t,\infty$:
\ben
\label{eq:FuchsEqn}
 \del_z Y(z) = A(z) Y(z),
\een
with
\begin{align}
 Y(z) & = \begin{pmatrix} y(z) \\ w(z)\end{pmatrix}, \\
\label{eq:A}
 A(z) & = \frac{A_0}{z} + \frac{A_1}{z-1} + \frac{A_t}{z-t},
\end{align}
where
\ben
\label{eq:A_infty}
 A_\infty = - (A_0 + A_1 + A_t) = \diag(k_1, k_2)
\een
is a diagonal matrix. The matrices $A_i$ can be chosen such that $A_i$ has eigenvalues $\{ 0, \theta_i \}$ for $i \in \{ 0, 1, t \}$. Then the Frobenius exponents are $0$ and $\theta_i$ for $i = 0, 1, t$ and $k_1$ and $k_2$ near $\infty$. The $A_i$ depend on the coefficients of the 
interpolating equation \eqref{eq:FuchsIsomonodromy}, $\theta_i, k_i, K_x, p, q$, and their explicit form is given in\footnote{To compare with their formulas, one has to put $q \to y, p \to z, k_i \to \kappa_i$.} \cite{JimboMiwaII}, equation C.47. However, without having to consider  the explicit form of the matrices $A_i$, one can note that, by taking the trace of \eqref{eq:A_infty}, we have \eqref{eq:kappa_theta}. Following \cite{JimboMiwaII}, we introduce
\ben
\label{eq:kappa_vth_infty}
 \theta_\infty = k_1 - k_2.
\een
Hence, with the above assignments, we have
\ben
\label{eq:theta_varthetai}
 \theta_\infty = - \eta_\infty + 1.
\een
Some of the results from the literature that we are going to use are available for a slightly modified system with coefficient matrices $\tilde A_i$ having eigenvalues $\pm \theta_i/2$. We may easily transform to such a system by considering
\ben
 \tilde Y(z) = z^{-\theta_0/2} (z-1)^{-\theta_1/2} (z-t)^{-\theta_t/2} Y(z)
\een
and $\tilde A_i = A_i - \frac{\theta_i}{2} 1_2$ for $i \in \{0, 1, t \}$, as well as $\tilde A_\infty = \diag( \frac{\theta_\infty}{2}, - \frac{\theta_\infty}{2})$.

The isomonodromic flow is obtained by imposing a further differential equation on the system in the flow parameter $t$, i.e. considering $\tilde Y$
as a function of both $(z,t)$ and postulating the overdetermined system
\ben
\label{Schl1}
\begin{split}
    \del_t \tilde Y(z,t) &= \tilde B(z,t)  \, \tilde Y(z,t),\\
    \del_z \tilde Y(z,t) &= \tilde A(z,t)  \, \tilde Y(z,t),\\
\end{split}
\een
where 
\ben
     \tilde B(z,t) = - \frac{\tilde A_t(t)}{z-t} .
\een
Imposing consistency onto the overdetermined system \eqref{Schl1} is well-known to be equivalent to the requirement 
that the $\tilde A_i$ solve the ``Schlesinger equations''
\begin{align}
\label{Schl2}
    \del_t \tilde A_0 & = \frac{1}{t} [\tilde A_t, \tilde A_0], &
    \del_t \tilde A_1 & = \frac{1}{t-1} [\tilde A_t, \tilde A_1], &
    \del_t \tilde A_t & = \frac{1}{t} [\tilde A_0, \tilde A_t] + \frac{1}{t-1} [\tilde A_1, \tilde A_t].
\end{align}
These not only guarantee that the above conditions on the eigenvalues of the $\tilde A_i$  are preserved under the flow of $t$, but also that the monodromies of the system are preserved. By definition, the monodromies are defined as the matrices $\tilde M[\gamma]$ such that a fundamental solution $\tilde Y(z,t)$ (a row of two column vectors of two linearly independent solutions to \eqref{eq:FuchsEqn}, i.e., a $2 \times 2$ matrix), when analytically continued by the usual method of overlapping neighborhoods along a closed loop $\gamma$ avoiding any pole, is right-multiplied by $\tilde M[\gamma]$ having gone counterclockwise around the loop once. By the usual arguments, the monodromy matrices $\tilde M[\gamma]$ only depend on the homotopy class of the loop on the punctured Riemann sphere $\mathbb{P}^1 \setminus \{0,t,1,\infty\}$. Since the fundamental solution is only unique up to a constant similarity transformation conjugation, the monodromy is properly speaking a homomorphism from $\pi_1(\mathbb{P}^1 \setminus \{0,t,1,\infty\}) \to SL_2(\mathbb{C})/\sim$. 

From a physics viewpoint, the isomonodromic 
deformation property guaranteed by equations \eqref{Schl1} respectively \eqref{Schl2}, is best understood as a kind of zero curvature condition for an appropriate $SL_2(\mathbb{C})$ connection, $\boldsymbol{A}=\tilde A \ud z + \tilde B \ud t$. From this perspective, the monodromy matrices are the holonomies of closed loops in $(\mathbb{P}^1 \setminus \{0,t,1,\infty\})_z \times \mathbb{R}_t$, i.e., 
\begin{equation}
\tilde M[\gamma] = {\rm P} \exp \left( \oint_\gamma \boldsymbol{A} \right)
\end{equation}
for a loop $\gamma$ having a constant $t$.
For a flat connection, the traces of these are clearly homotopy invariant, and thus in particular invariant under a variation of $t$.

Using the parameterization of the matrices $A_i$ of \cite{JimboMiwaII}, equation C.47, equations \eqref{Schl2} can be seen to be equivalent to the Hamiltonian system \eqref{eq:Hamilton} for $p,q$ given below, and these in turn are equivalent to a Painlev\'e VI equation, see \cite{JimboMiwaII}, equation C.56.

Following \cite{JimboMiwaII, Jimbo}, we fix a fundamental solution $\tilde Y$  normalized by the behavior at infinity as
\ben
\label{eq:AsymptoticCondition}
 \tilde Y = (1 + \cO(z^{-1})) \diag(z^{-\frac{\theta_\infty}{2}}, z^{\frac{\theta_\infty}{2}}).
\een
Next, one defines the so-called ``connection matrices'' $C^{(i)}$ by the local behaviour of this fundamental solution near the singularities at $z = z_i \in \{ 0, t, 1 \}$:
\ben
\label{eq:DefConnectionMatrices}
 \tilde Y(z, t) = (G^{(i)}(t) + \cO(z- z_i)) \diag( (z-z_i)^{\frac{\theta_i}{2}}, (z-z_i)^{-\frac{\theta_i}{2}} ) C^{(i)} \qquad z \to z_i.
\een
The matrices $G^{(i)}$ and $C^{(i)}$ are invertible, and the connection matrices $C^{(i)}$ are independent of $t$. The monodromy matrices corresponding to loops around the poles $z_i$ can be expressed in terms of the connection matrices by 
\ben
\label{eq:Monodromy}
 \tilde M^{(i)} = ( C^{(i)} )^{-1} \diag( e^{i \pi \theta_i}, e^{- i \pi \theta_i} ) C^{(i)}.
\een
By construction, they are independent of $t$, as are the composite monodromies. Note that for a fixed choice of $\tilde Y$, the definition \eqref{eq:DefConnectionMatrices} only fixes the connection matrices up to multiplication with an invertible diagonal matrix from the left. 

In the case when either a QNM respectively an AQNM mode exists between two singular points $z=0,x$, or equivalently between $r=r_+,r_c$ in terms of the radial BL coordinate $r$, we had to have
\ben
 C^{(+)} (C^{(c)})^{-1} = \begin{pmatrix} * & * \\ * & 0 \end{pmatrix} \quad \text{respectively} \quad \begin{pmatrix} 0 & * \\ * & * \end{pmatrix},
\een
Using these relations, one easily verifies that
\ben
\label{eq:QNM_necessaryCondition}
 \Tr ( \tilde M^{(+)} \tilde M^{(c)} ) = 2 \cos ( \pi (\theta_0 - \theta_t )).
\een
In particular, defining $\sigma_{i j}$ via
\ben
\label{eq:Def_sigma}
 2 \cos( \pi \sigma_{ij}) = \Tr ( \tilde M^{(i)} \tilde M^{(j)} ),
\een
we obtain that \cite{NMLC}
\ben
\label{eq:thetaCondition}
 \theta_0 - \theta_t \pm \sigma_{0t} \in 2 \Z
\een
is a necessary condition for the occurrence of either a QNM or AQNM between the singular points $r_+$ and $r_c$. Conversely, if \eqref{eq:thetaCondition} and $\theta_0, \theta_t \not \in \Z$ (which we assume for the time being) holds, then there must be either a QNM or an AQNM mode between $r_+$ and $r_c$).

Inserting the values of the $\theta$'s for the radial equation, we see that a necessary condition for this to happen is that \cite{NMLC}
\begin{equation}
\label{eq:QNM_theta_sigma}
    \theta_0 - \theta_t \pm \sigma_{0t} = \eta_0 - \eta_x + 1 \pm \sigma_{0t} = - i \left( \frac{1}{\kappa_c} + \frac{1}{\kappa_+} \right) \omega + i m \left( \frac{\omega_c}{\kappa_c} + \frac{\omega_+}{\kappa_+} \right) + 1 \pm \sigma_{0t} \in 2 \Z.
\end{equation}
Below we will often use the notation $\sigma \defeq \sigma_{0t}$. As is obvious from the definition \eqref{eq:Def_sigma}, we can and always will choose $\sigma$ such that $0 \leq \Real \sigma < 1$. It is also possible to reverse this argument, showing that, whenever $\theta_0 - \theta_t \pm \sigma \in 2 \Z$, then we have a QNM or an AQNM, although through this equation we cannot a priori detect whether it is one or the other.

\subsection{Mass symmetry transformations}

We will now show, using the method of isomonodromic deformations, 
that the composite monodromy parameter $\sigma = \sigma_{0t}$ 
in the Heun equation has certain symmetries. First of all, 
since we have $k_1-k_2=-\theta_\infty,k_1+k_2=-(\theta_1+\theta_2+\theta_t)$, and since the Heun equation 
is written in terms of $k_1,k_2$, the $\theta_i$, and $K_x$, 
we can view the coefficients in the Heun equation and $\sigma$
as a function $\sigma=\sigma(\btheta, K_x)$ of the parameters
$(\btheta, K_x) \equiv (\theta_0, \theta_t, \theta_x, \theta_\infty , K_x)$.

To connect these parameters with $\sigma$, an important role is played by the so-called $\Sigma_{\rm VI}$-function, defined as \cite{JimboMiwaII}
\begin{equation}
 \Sigma_{\rm VI} 
 = t(t-1) H_x + t(m_3m_4-m_2m_4-m_2m_3) -\frac{1}{2}\big[m_3m_4+m_1m_2-(m_2+m_1)(m_3+m_4)\big],
\end{equation}
where $H_x$ is the Hamiltonian \eqref{eq:K_constraint}, and 
where we have defined\footnote{Our $m_i$ are related to $\nu_i$ of \cite{JimboMiwaII}, (C.60) by $m_1 = \nu_4$, $m_2 = - \nu_3$, $m_3 = - \nu_1$, $m_4 = - \nu_2$.}
\begin{align}
\label{eq:Def_m_i}
 m_1 & = \frac{1}{2} (\theta_0 - \theta_1), &
 m_2 & = \frac{1}{2}(\theta_1 + \theta_0), &
 m_3 & = \frac{1}{2}(- \theta_\infty - \theta_t), &
 m_4 & = \frac{1}{2}(\theta_\infty - \theta_t).
\end{align}
As a consequence of the Hamilton equations
\begin{equation}
\label{eq:Hamilton}
    \dot p = - \frac{\partial H_x(p,q,t)}{\partial q}, 
\quad 
\dot q =  \frac{\partial H_x(p,q,t)}{\partial p}, 
\end{equation}
for the functions $p(t), q(t)$ expressing the isomonodromic property of the flow in $t$, the function $\Sigma_{\rm VI}$ solves a second order differential equation in $t$ equivalent to the Painlevé VI equation. This equation can be expressed solely in terms of the $m_i$, and is invariant under any permutation of the $m_i$, as well as the simultaneous sign change of any even number of them, see \cite{JimboMiwaII}, (C.61):
\begin{equation}
\label{sigma6}
\begin{split}
   & \frac{\ud}{\ud t} \Sigma_{\rm VI} \left[t(t-1) \frac{\ud^2}{\ud t^2} \Sigma_{\rm VI} \right]^2+
   \left[
2 \frac{\ud}{\ud t} \Sigma_{\rm VI} \left(t \frac{\ud}{\ud t} \Sigma_{\rm VI} -
\Sigma_{\rm VI} \right) - \left( \frac{\ud}{\ud t} \Sigma_{\rm VI}\right)^2
- m_1m_2m_3m_4
   \right]^2\\
   &=\prod_{i=1}^4 \left(\frac{\ud}{\ud t} \Sigma_{\rm VI}+m_i^2 \right) .
\end{split}   
\end{equation}
In the following, we will exploit this symmetry to 
infer a certain symmetry of $\sigma=\sigma(\btheta, K_x)$.

To establish a relation between $\sigma \equiv \sigma_{0t}$ and $\Sigma_{\rm VI}$, we first use the asymptotic behavior as $t \to 0$, which can be extracted from (2.15) of \cite{Jimbo},
\begin{subequations}
\label{eq:Jimboasy}
\begin{align}
    q & = - 
    \frac{(\theta_\infty + \theta_1 +  \sigma)^2 [ \theta_0^2 - (\theta_t +  \sigma)^2]}{4  \sigma^2 [(\theta_\infty + \theta_1)^2 -  \sigma^2]} \tilde\varsigma^{-1} t^{1- \sigma} + \cO(t), \\
    q p & = \frac{\theta_0 + \theta_t -  \sigma}{2} + \cO(t^{\sigma})
\end{align}
\end{subequations}
where $\tilde \varsigma = \tilde\varsigma(\sigma=\sigma_{0t}, \sigma_{1t}, \btheta)$
is given by
\begin{equation}
\label{eq:hat_s}
    \tilde\varsigma = \frac{
    \Gamma^2(1-\sigma) \prod_\pm
    \Gamma(\frac{1}{2} (\pm \theta_0 + \theta_t + \sigma) +1) 
    \Gamma(\frac{1}{2} (\pm \theta_\infty + \theta_1 + \sigma) +1) 
    }{
    \Gamma^2(1+\sigma) \prod_\pm 
    \Gamma(\frac{1}{2} (\pm \theta_0 + \theta_t - \sigma) +1) 
    \Gamma(\frac{1}{2} (\pm \theta_\infty + \theta_1 - \sigma) +1) 
    } \varsigma
\end{equation}
and where $\varsigma$ is the 
function of $(\sigma=\sigma_{0t},\sigma_{1t}, \btheta)$ solving  the quadratic equation
\begin{multline}
\label{eq:s1}
    \sum_\pm {\varsigma}^{\pm 1} \prod_{\pm'}\sin \frac{\pi}{2} (\theta_0 \pm' \theta_t \mp \sigma) 
    \sin \frac{\pi}{2} (\theta_\infty \pm' \theta_1 \mp \sigma) 
    \\
    = 2 \prod_{\pm} \sin \frac{\pi}{2} (\theta_0 \pm \theta_t) \sin \frac{\pi}{2} (\theta_\infty \pm \theta_1) - \frac{1}{2} \sin^2 \pi \sigma \cos \pi \sigma_{1t}\\
    -\frac{1}{2}(\cos \pi \sigma -1)
\left( \cos \pi \theta_t \cos \pi \theta_\infty  +\cos \pi \theta_0 \cos \pi \theta_1 \right) .
\end{multline} 
The asymptotic formulas \eqref{eq:Jimboasy} for $p,q$ are valid under the assumptions 
\begin{align}
\label{eq:Constraints}
    \theta_0, \theta_t, \theta_1, \theta_\infty & \not\in \Z, &
    0 & \leq \Real \sigma < 1, &
    \frac{1}{2} (\theta_\infty \pm \theta_1 \pm' \sigma) & \not\in \Z, &
    \frac{1}{2} (\theta_0 \pm \theta_t \pm' \sigma) & \not\in \Z, 
\end{align}
and $\sigma \neq 0$. Similar formulas 
hold for $\sigma = 0$. They involve 
$\log t$ and the expansion coefficients of $\tilde \varsigma = 1 + \tilde \varsigma_1 \sigma + \cO(\sigma^2)
$, and can be extracted from \cite{Jimbo}.

Substituting the asymptotic formulas into the definition of 
$\Sigma_{\rm VI}$, one obtains \cite{Jimbo}, for $t \to 0$,
\ben
\label{eq:Sigmalim}
\Sigma_{\rm VI}(t) =  - \frac{1}{4}\sigma^2 + \frac{1}{4} ( m_1^2 + m_2^2 + m_3^2 + m_4^2 ) + 
\begin{cases} 
\cO\left(|t|^{1-\Real(\sigma)} \right) & \text{for $\sigma \neq 0$,}\\
\cO\left(|t \log t|^2 \right) & \text{for $\sigma = 0$.}
\end{cases}
\een
 Using the initial values for the dynamical variables $p(t), q(t)$ at $t=x$, 
we can also obtain the initial conditions for $\Sigma_{\rm VI}(t)$ and its first $t$-derivative at $t=x$. These turn out to be:
\ben
\label{eq:sigma_VI_x}
 \Sigma_{\rm VI}|_{t = x} = x(x-1) K_x + x 
 (m_3m_4-m_2m_3-m_2m_4) + \frac{1}{2}\big[-m_1m_2-m_3m_4 + (m_2+m_1)(m_3+m_4)\big]
\een
and, using the Hamilton equations \eqref{eq:Hamilton} and the relations between the $\theta$'s and the $k$'s, 
\ben
\label{eq:dsigma_VI_x}
 \frac{\ud}{\ud t} \Sigma_{\rm VI} |_{t = x} = -m_4^2.
\een
Consider now the following discrete group $D_3$ acting on $m_1, m_2, m_3, m_4$
as
\begin{equation}
\label{eq:D3}
    D_3: \quad m_i \mapsto \hat m_i := \pm_i m_{\pi(i)}, \quad i=1,2,3,4, 
    \quad \prod_{i=1}^4 (\pm_i 1) = 1, \quad \pi(4)=4,
\end{equation}
where $\pi$ is a permutation on the three elements $\{1,2,3\}$ but not $4$, and where the signs $\pm_i$ can be chosen independently of $i=1,2,3,4$ but there are must be an even number of minus signs. By what we have already said, the differential equation 
for $\Sigma_{\rm VI}$ is invariant under \eqref{eq:D3}. The initial condition \eqref{eq:dsigma_VI_x} also is invariant under \eqref{eq:D3}. 
The initial condition \eqref{eq:sigma_VI_x} is invariant 
under \eqref{eq:D3} provided that we change the accessory parameter $K_x$
according to 
\begin{equation}
\label{eq:Kchange}
\begin{split}
    K_x \mapsto \hat K_x := 
    K_x 
    &+  
 \frac{m_3m_4-m_2m_3-m_2m_4}{x-1} + \frac{-m_1m_2-m_3m_4 + (m_2+m_1)(m_3+m_4)}{2x(x-1)} \\
 &-
 \frac{\hat m_3\hat m_4-\hat m_2\hat m_3-\hat m_2\hat m_4}{x-1} - \frac{-\hat m_1\hat m_2-\hat m_3\hat m_4 + (\hat m_2+\hat m_1)(\hat m_3+\hat m_4)}{2x(x-1)}.
 \end{split}
\end{equation}
By uniqueness of solutions to second order ordinary differential equations, it follows that $\Sigma_{\rm VI}(t,\bm, K_x)$ is equal to $\Sigma_{\rm VI}(t,\hat \bm, \hat K_x)$ for any 
operation in $D_3$, where $\bm = (m_1,m_2,m_3,m_4)$
and similarly for $\hat \bm$. This symmetry hence also persists in the limit $t \to 0$, which exists if the constraints \eqref{eq:Constraints} hold. 
Furthermore, since in the limit \eqref{eq:Sigmalim}, the 
combination $m_1^2+m_2^2+m_3^2+m_4^2$ clearly is invariant under $D_3$, 
it follows that also $\sigma^2=\sigma(\bm,K_x)^2$ is invariant. 

Finally, $\sigma=\sigma(\bm,K_x)$ is clearly continuous as a function 
of its parameters, because these are just the parameters of the Heun equation, and the traces of the monodromies change continuously with those parameters. It follows that $\sigma^2=\sigma(\bm,K_x)^2$ is invariant for {\it all} values of the parameters. We therefore have obtained the following proposition:

\begin{proposition}
\label{propmasssym}
    The composite monodromy parameter $\sigma$, viewed as a function of the parameters $\bm, K_x$ in the Heun equation via equation \eqref{eq:Def_m_i}, satisfies 
    \begin{equation}
        \sigma(m_1,m_2,m_3,m_4,K_x)
= \pm \sigma(\hat m_1,\hat m_2,\hat m_3,\hat m_4,\hat K_x)
    \end{equation}
    for any $D_3$ transformation \eqref{eq:D3} and a corresponding change
    \eqref{eq:Kchange} of the accessory parameter.
\end{proposition}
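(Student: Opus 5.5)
The plan is to realize both Heun equations (with data $(\bm,K_x)$ and $(\hat\bm,\hat K_x)$) as the $t=x$ slice of isomonodromic families. Their $\Sigma_{\rm VI}$ functions will then coincide for all $t$, and the $t\to 0$ limit \eqref{eq:Sigmalim} will read off $\sigma^2$ on both sides.

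\textbf{Step 1: set up the two flows.} Given the Heun parameters, I define $\theta_0,\theta_1,\theta_t,\theta_\infty$ via \eqref{eq:theta_vartheta}, \eqref{eq:theta_varthetai}, and $p(x)=-K_x/\theta_t$, $q(x)=x$. The Hamiltonian system \eqref{eq:Hamilton} then gives a local solution $p(t),q(t)$, and the Schlesinger equations \eqref{Schl2} keep all monodromies, hence $\sigma=\sigma_{0t}$, fixed along the flow. I do the same with the hatted data. The hatted $\theta$'s are obtained by inverting \eqref{eq:Def_m_i}, which is a linear bijection $\btheta\leftrightarrow\bm$.

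\textbf{Step 2: identify $\Sigma_{\rm VI}$ and $\hat\Sigma_{\rm VI}$.} Both functions solve the second-order ODE \eqref{sigma6}. This ODE depends on $\bm$ only through combinations invariant under $D_3$: it is symmetric under permutations and even sign changes, because the product $m_1m_2m_3m_4$ and $\prod_i(\Sigma'+m_i^2)$ are invariant. The initial slope \eqref{eq:dsigma_VI_x} is $-m_4^2$, which is unchanged because $\pi(4)=4$. The initial value \eqref{eq:sigma_VI_x} agrees for the two problems precisely because of the choice \eqref{eq:Kchange}; I verify this by direct substitution.

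The main obstacle is uniqueness. Equation \eqref{sigma6} is not solved for $\Sigma''$: it is quadratic in $\Sigma''$ and may degenerate. I would handle this by working on the dense open set of parameters where $\Sigma''(x)\neq 0$ and the discriminant is nonzero. On that set I choose the branch consistently; the value of $\Sigma''(x)$ itself is computable from $p,q$ via the Hamilton equations and should also match after the transformation. On that set, standard ODE uniqueness gives $\Sigma_{\rm VI}(t)=\hat\Sigma_{\rm VI}(t)$ near $t=x$. Analytic continuation then extends the equality along a path to a neighbourhood of $t=0$, using the Painlevé property, under which solutions are meromorphic away from $0,1,\infty$.

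\textbf{Step 3: take the limit and extend.} Under the generic constraints \eqref{eq:Constraints} for both parameter sets, \eqref{eq:Sigmalim} gives
\[
-\tfrac14\sigma^2+\tfrac14\textstyle\sum_i m_i^2=-\tfrac14\hat\sigma^2+\tfrac14\sum_i\hat m_i^2 .
\]
Since $\sum_i m_i^2$ is $D_3$-invariant, this yields $\sigma^2=\hat\sigma^2$. To remove the genericity assumptions I use that $\cos\pi\sigma$ is a trace of monodromy. It therefore depends analytically, in particular continuously, on the Heun coefficients, so the identity $\sigma^2=\hat\sigma^2$ extends by continuity from the dense generic set to all parameters. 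Taking square roots gives the statement with a sign $\pm$.
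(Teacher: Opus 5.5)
\paragraph{Verdict.} Your route is the paper's route. You flow both data sets isomonodromically and note that \eqref{sigma6} and the data \eqref{eq:sigma_VI_x}, \eqref{eq:dsigma_VI_x} are $D_3$-invariant after $K_x\mapsto\hat K_x$. You then conclude $\Sigma_{\rm VI}=\hat\Sigma_{\rm VI}$, read off $\sigma^2$ from \eqref{eq:Sigmalim}, and extend by continuity. The paper simply cites uniqueness for second-order ODEs. You correctly notice that \eqref{sigma6} is quadratic in $\Sigma''$. However, the step you leave as ``should also match'' is exactly where the argument breaks for half of $D_3$, so this is a genuine gap.

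\paragraph{Where it fails.} At $t=x$ the factor $\Sigma'+m_4^2$ vanishes, so \eqref{sigma6} reduces to $m_4^2\,[x(x-1)\Sigma''(x)]^2=B^2$, where $B=2\Sigma'(x\Sigma'-\Sigma)-\Sigma'^2-m_1m_2m_3m_4$ is evaluated at $t=x$. The flow picks one root. For the $\tau$-function $\Sigma_{\rm VI}$ one has $\Sigma'=\Tr\big((\tilde A_0+\tilde A_1)\tilde A_t\big)+{\rm const}$ by \eqref{Schl2}. Hence $\Sigma''=-\theta_\infty\,\tfrac{d}{dt}(\tilde A_t)_{11}$, and using $(A_t)_{12}=(A_t)_{11}=0$ at $q=t$ a short computation gives $x(x-1)\Sigma''(x)=-B/m_4$. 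Once $\Sigma(x)$ and $\Sigma'(x)$ agree, $B$ is $D_3$-invariant, so
\[
\hat\Sigma''(x)=\frac{m_4}{\hat m_4}\,\Sigma''(x).
\]

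\paragraph{Consequences for the two halves of $D_3$.} For $\pm_4=+$ the branches agree and your argument closes on $\{B m_4\neq 0\}$, with continuity doing the rest. This case includes $m_2\leftrightarrow m_3$, which is all that Sec.~\ref{sec:Rtransf} uses. For $\pm_4=-$ you get two different solutions with the same $\Sigma(x),\Sigma'(x)$ and opposite $\Sigma''(x)$. Inserting them into $\Sigma=\Sigma_0+\alpha t+\beta_+t^{1+\sigma}+\beta_-t^{1-\sigma}+\dots$ shows that $\sigma^2$ generically shifts at order $x$. So the missing step cannot be supplied.

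\paragraph{A concrete check.} Take the overall flip $\bm\to-\bm$; it lies in $D_3$ and is the $\Z_2$ of the corollary, and \eqref{eq:Kchange} gives $\hat K_x=K_x$. Choose $\theta_1=1$ and $K_x=-k_1(k_2+1)/x$. Then $z=1$ is an ordinary point, so $\tilde M^{(1)}=-1_2$ and $\cos\pi\sigma=-\cos\pi\theta_\infty$. After the flip, $z=1$ has exponents $\{0,-1\}$ and is logarithmic unless $\theta_\infty(1-x)+\theta_t x=0$. Then $\tilde M^{(1)}\neq -1_2$, and the trace generically changes.

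\paragraph{What can be proved.} Your argument, like the paper's, establishes the statement only on the index-two subgroup with $\hat m_4=m_4$.

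\paragraph{A caution for your computation.} If you compute $\Sigma''(x)$ ``from $p,q$ via the Hamilton equations'', use the $\tau$-function $\Sigma_{\rm VI}$, not $t(t-1)H_x$ with $H_x$ as in \eqref{eq:K_constraint}. The two differ by $-q(q-1)p+k_1(t-q)$ plus a function of $t$. With the latter, neither \eqref{eq:sigma_VI_x} nor \eqref{eq:dsigma_VI_x} holds, since $\tfrac{d}{dt}[t(t-1)H_x]$ at $q=t$ depends on $p$.
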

\noindent
{\bf Remark.} We mention that symmetries of $\Sigma_{\rm VI}$ under various exchanges of its  arguments $\btheta$ are well-known in the literature \cite{Okamoto,Jimbo}, see e.g., \cite{Gamayun:2012ma} for a summary. However, when applying these, one has to be careful what exactly one considers as the arguments of the $\Sigma_{\rm VI}$-function in addition to $t$. For example, in \cite{Gamayun:2012ma}, $\Sigma_{\rm VI}$
is considered as a function of $(\btheta, \varsigma, \sigma)$ (via the isomonodromic tau function), while we consider the arguments besides $t$ to be $(\btheta, K_x)$. Of course,  \eqref{eq:s1} and the initial conditions \eqref{eq:sigma_VI_x}, \eqref{eq:dsigma_VI_x} of $\Sigma_{\rm VI}$ at $t=x$ provide relations between these quantities, but these relations are not fully symmetric in the $m$'s.

\medskip
\noindent

\subsection{Transformation of radial equation}
\label{sec:Rtransf}
We now connect proposition \ref{propmasssym} to the condition \eqref{eq:QNM_theta_sigma} for (A)QNMs of the radial Teukolsky equation, which as we recall can be restated as: ($\theta_0 - \theta_t \pm \sigma \in 2\Z$, with either sign $\pm$) $\Leftrightarrow$ (the Heun equation with parameters $\btheta, K_x$ has a QNM or AQNM). On the other hand it immediately follows from the definition \eqref{eq:Def_m_i} of the parameters $m_i$ that 
$\theta_0-\theta_t = m_1+m_2+m_3+m_4$, hence we have 
\begin{equation}
\begin{split}
    &\ \ \left(\pm \sigma +\sum_{i=1}^4 m_i  \in 2\Z \quad 
    \text{for some sign $\pm$} \right) \\
    \Longleftrightarrow & \ \ 
    (\text{the radial Heun equation with parameters} \  (\bm, K_x) \ 
    \text{has a QNM or a AQNM}).
\end{split}
\end{equation}
From the symmetric structure of this condition we immediately get the following corollary to proposition \ref{propmasssym}:

\begin{corollary}
Let $S_3 \times \Z_2$ be the subgroup of $D_3$ acting as 
$(m_1,m_2,m_3,m_4) \mapsto \pm(m_{\pi(1)},m_{\pi(2)},m_{\pi(3)},m_4)=(\hat m_1,\hat m_2,\hat m_3,\hat m_4)$, with a corresponding change of $\btheta, K_x$
as in equations \eqref{eq:Def_m_i}, \eqref{eq:Kchange},
where $\pi$ is any permutation on $\{1,2,3\}$. Then
\begin{equation}
\label{QNMinference}
\begin{split}
    &\ \ (\text{the radial equation with parameters} \  \btheta, K_x \ 
    \text{has a QNM or a AQNM}). \\
    \Longleftrightarrow & \ \ 
    (\text{the radial equation with parameters} \  \hat \btheta, \hat K_x \ 
    \text{has a QNM or a AQNM}).
\end{split}
\end{equation}    
\end{corollary}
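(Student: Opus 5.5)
The corollary follows by combining Proposition \ref{propmasssym} with the characterization of (A)QNMs just stated. That characterization says that the radial equation with parameters $(\bm,K_x)$ has a QNM or an AQNM if and only if $\pm\sigma(\bm,K_x)+\sum_i m_i\in 2\Z$ for some sign. So it suffices to show that, under the $S_3\times\Z_2$ transformations, the set $\{\pm\sigma+\sum_i m_i\}$ modulo $2\Z$ is preserved up to the allowed choice of sign.

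The first step is to check that $S_3\times\Z_2$ is indeed a subgroup of $D_3$. The overall sign flips all four $m_i$, which is an even number of sign changes, and $\pi$ fixes $4$. Hence Proposition \ref{propmasssym} applies and gives $\sigma(\bm,K_x)=\pm\sigma(\hat\bm,\hat K_x)$, with $\hat K_x$ given by \eqref{eq:Kchange}.

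The second step is to compare the sums $\sum_i m_i$.
\begin{itemize}
\item For a pure permutation $\pi$ of $\{1,2,3\}$, we have $\sum_i\hat m_i=\sum_i m_i$. So the condition $\pm\sigma+\sum m_i\in2\Z$ for some sign becomes $\pm\hat\sigma+\sum\hat m_i\in 2\Z$ for some sign, because $\hat\sigma=\pm\sigma$ and the sign is free.
\item For the overall sign flip, $\sum\hat m_i=-\sum m_i$. Then $\pm\sigma+\sum m_i\in2\Z$ is equivalent to $\mp\sigma-\sum m_i\in 2\Z$, i.e.\ to $\mp\hat\sigma'+\sum\hat m_i\in2\Z$ with $\hat\sigma'=\pm\hat\sigma$. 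This again holds for some sign.
\end{itemize}
Composing the two cases covers the whole subgroup. Since the group acts invertibly, the same argument gives both directions of the equivalence \eqref{QNMinference}.

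The third step is to translate back to $(\hat\btheta,\hat K_x)$. The map \eqref{eq:Def_m_i} from $\btheta$ to $\bm$ is linear and invertible, so the transformed $\hat\bm$ defines unique $\hat\btheta$. In particular $\hat\theta_0-\hat\theta_t=\sum\hat m_i$, and the characterization above applies verbatim to the transformed Heun equation.

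The main subtlety is that the characterization ``$\theta_0-\theta_t\pm\sigma\in2\Z$ $\Leftrightarrow$ QNM or AQNM'' was stated under genericity assumptions, such as $\theta_0,\theta_t\notin\Z$. These may hold for $\btheta$ and fail for $\hat\btheta$. I would state the corollary for parameters where both sides are generic. I would then note, as the paper does, that the exceptional set is handled by the same continuity argument used for Proposition \ref{propmasssym}, with logarithmic Frobenius solutions in the integer cases. I would not attempt a full treatment of the nongeneric cases beyond that remark.
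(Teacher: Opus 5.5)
Your proposal is correct and is essentially the paper's own argument. The paper likewise combines Proposition~\ref{propmasssym} ($\hat\sigma=\pm\sigma$) with the identity $\theta_0-\theta_t=\sum_i m_i$, which the permutations preserve and the overall sign flip negates, so the condition $\pm\sigma+\sum_i m_i\in 2\Z$ for some sign is invariant. Your caveat that the genericity assumptions (e.g.\ $\hat\theta_0,\hat\theta_t\notin\Z$) might fail on the transformed side is a fair point, which the paper handles only through its general continuity remark.
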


We now show that for the Kerr-dS black hole, the potential  in a Schr\" odinger form of the radial Heun equation becomes real valued after the mass symmetry $m_2 \leftrightarrow m_3$ when $\omega$ is real. 

In terms of the $\theta$'s, this 
mass symmetry corresponds to the shift
$\hat \theta_j = \theta_j - \delta$, $j \in \{0,t,1,\infty\}$, where $\delta = \frac{1}{2} \sum_j \theta_j$.
Substituting the definitions $\theta_0 = \eta_0, \theta_1 = \eta_1, \theta_t = \eta_x-1, \theta_\infty = -\eta_\infty +1$, we obtain 
formulas for $\hat \eta_0, \hat \eta_1, \hat \eta_x$, and we see using 
\eqref{eq:etadef} that in those formulas, all the terms involving $s$ drop out\footnote{
We do not require $\hat \eta_\infty$, which could be determined as the solution to
\begin{equation}
    \hat k_1 ( \hat k_2 + 1) = ( 1 - 2 s) ( 1 - 2 s + \hat \eta_\infty ).
\end{equation}
}:
\begin{subequations}
\label{MassSym}
\begin{align}
 \hat \eta_0 & = \frac{1}{2} ( \eta_0 - \eta_1 + \eta_\infty - \eta_x ) =i\left( \frac{\omega-m\omega_-}{\kappa_-} - \frac{\omega-m\omega_+}{\kappa_+} \right) \\
 \hat \eta_x & =  \frac{1}{2} ( -\eta_1 - \eta_0 + \eta_\infty + \eta_x )
 =i\left( \frac{\omega-m\omega_c}{\kappa_c} + \frac{\omega-m\omega_-}{\kappa_-} \right) \\
  \hat \eta_1 & = \frac{1}{2} \left( \eta_\infty - \eta_x + \eta_1 - \eta_0 \right) =i\left( \frac{\omega-m\omega_c}{\kappa_c} - \frac{\omega-m\omega_+}{\kappa_+} \right), 
\end{align}    
\end{subequations}
which are purely imaginary for real $\omega$. 
Then we determine $\hat k_1, \hat k_2$ by
\begin{subequations}
\begin{align}
    \hat k_1 & = \frac{1}{2} \left( \hat \theta_\infty - \hat \theta_0 - \hat \theta_1 - \hat \theta_t \right) = 1 - \eta_\infty , \\
    \hat k_2 & = \frac{1}{2} \left( - \hat \theta_\infty - \hat \theta_0 - \hat \theta_1 - \hat \theta_t \right) = 2 s - \eta_\infty.
\end{align}
\end{subequations}
as well as (see \eqref{eq:Kchange})
\begin{equation}
\label{eq:Kchange1}
\begin{split}
    \hat K_x =& \,
    K_x 
    +  
 \frac{2(m_3-m_2)m_4}{x-1} + \frac{(m_4-m_1)(m_2-m_3)}{x(x-1)} 
 \end{split}
\end{equation}
It is most instructive to analyze the consequences of the mass symmetry transformation 
in terms of a Schr\" odinger form of the Heun radial equation: Instead of $y(z)$ we can consider 
\ben
u(z) = z^{\frac{1}{2}(1 - \eta_0)} (z-1)^{\frac{1}{2}(1 - \eta_1)} (z-x)^{\frac{1}{2}(1 - \eta_x)} y(z). 
\een
This substitution puts the Heun equation \eqref{eq:HeunIsomonodromy} into the 
Schrödinger form 
\begin{equation}
   (-\partial_z^2+V_z) u = 0
\end{equation}
with
\begin{multline}
\label{eq:V}
V_z(z) = 
 -\frac{1 - \eta_0^2}{4z^2} 
 - \frac{1 - \eta_1^2}{4(z-1)^2} 
 - \frac{1 - \eta_x^2}{4(z-x)^2} 
 +  \frac{(\eta_0 - 1)(\eta_1 - 1)}{2z(z-1)} 
 +  \frac{(\eta_0 - 1)(\eta_x - 1)}{2z(z-x)} \\ 
 +  \frac{(\eta_1 - 1)(\eta_x - 1)}{2(z-1)(z-x)} 
 - \frac{ k_1 (k_2 + 1) }{z (z-1)} 
 + \frac{x (x-1) K_x}{z (z-1) (z-x)}.
\end{multline}
The condition for QNMs is 
\begin{align}
 u(z) & = z^{- \frac{1}{2}\eta_0 + \frac{1}{2}}(1+{\mathcal O}(z)), &
 u(z) & = (z-x)^{\frac{1}{2} \eta_x + \frac{1}{2}}(1+{\mathcal O}(z-x)),
\end{align}
while the one for AQNMs is
\begin{align}
 u(z) & = z^{\frac{1}{2} \eta_0 + \frac{1}{2}}(1+{\mathcal O}(z)), &
 u(z) & = (z-x)^{-\frac{1}{2} \eta_x + \frac{1}{2}}(1+{\mathcal O}(z-x)).
\end{align}
To map this Schr\" odinger problem in the interval $z \in (0,x)$ between the singular points where the boundary conditions are imposed, 
to one on the entire real line, we now perform yet another transformation to $\tilde R = Z^{- \frac{1}{2}} u$, with $Z = z (1-z) (x-z)$, seen as a function of $\tilde z$ defined by $\ud \tilde z = Z^{-1} \ud z$. Note that the factor $Z$ has been chosen to be positive for $0 < z < x$. With respect to this variable, $\tilde R$ fulfills another Schr\" odinger equation,
\begin{equation}
\label{eq:HeunSch}
(-\partial_{\tilde z}^2  + V_{\tilde z}) \tilde R = 0 
\end{equation}
with
\ben
\label{eq:tilV}
 V_{\tilde z} = Z^2 \left[ V_z - \frac{1}{2} \frac{\partial_z^2 Z}{Z} + \frac{1}{4} \frac{(\partial_z Z)^2}{Z^2} \right].
\een
The QNM boundary conditions for $\tilde R$ are
\begin{equation}
\label{QNMosc}
 \tilde R \sim \begin{cases} e^{-\frac{1}{2} x \eta_0 \tilde z} & \tilde z \to - \infty, \\ e^{-\frac{1}{2}x (x-1) \eta_x \tilde z} & \tilde z \to + \infty, \end{cases}
\end{equation}
whereas those for AQNMs are
\begin{equation}
\label{AQNMosc}
 \tilde R \sim \begin{cases} e^{\frac{1}{2} x \eta_0 \tilde z} & \tilde z \to - \infty, \\ e^{\frac{1}{2} x (x-1) \eta_x \tilde z} & \tilde z \to + \infty. \end{cases}
\end{equation}
When considering the Schrödinger equation \eqref{eq:HeunSch} after a mass symmetry transformation, one has to replace the $\eta$'s, $k$'s, and $K_x$ both in \eqref{eq:V} and in \eqref{QNMosc}, \eqref{AQNMosc} by their hatted counterparts. Substituting these 
into the potential $ V_{\tilde z}$, (see \eqref{eq:tilV}, \eqref{eq:V}) one obtains yet another potential. 
An automated computation \cite{nb} shows that this potential is\footnote{
In order to make a comparison with \cite{nb}, one should identify 
$\hat V_{\tilde z}=${\tt V1324t}, as well as $r_+=${\tt rp}, $r_-=${\tt rm}, 
$r_c=${\tt rc}, $r_o=${\tt ro}, $\alpha=${\tt al}, $\blambda=${\tt lamb}, 
$\eta_0=${\tt t0}, $\eta_x=${\tt tx}, $\eta_1=${\tt t1}, $\eta_\infty=${\tt ti}.
}
\begin{multline}
\label{eq:tilVnew}
\hat V_{\tilde z}(\tilde z)  =
\frac{1}{4} \bigg[ 
z(x-1) (\eta_0-s) 
 + (z-1)x (\eta_1-s) 
 + (z-x)(\eta_x-s)
\bigg]^2
 -z(z-x)(x-1)(\eta_0-s)(\eta_x-s)\\
  - z(z-1)(z-x) \bigg\{ (1-s^2) z 
  - \frac{L^2 [\blambda  -s(1-\alpha)] + s(s+1) (r_+ + r_-)^2  - (s+1)(r_o r_c + r_+ r_-)}{(r_c - r_-)(r_+ - r_o)}
  \bigg\}.
  \end{multline} 
For real $\omega$ the terms $\eta_j-s$ are purely imaginary and 
the separation constant $\blambda$ is real.  Therefore, also $\hat V_{\tilde z}(\tilde z)$ is real for real $\tilde z$ values. For the asymptotic values of this potential as $\tilde z \to \pm \infty$, we find, 
\begin{equation}
\label{Vasympt}
    \hat V_{\tilde z}(\tilde z) \to
    \begin{cases}
    {\displaystyle
        -\frac{x^2}{4} \left( \frac{\omega-m\omega_-}{\kappa_-} - \frac{\omega-m\omega_+}{\kappa_+} \right)^2} & \text{as $\tilde z \to -\infty$,}\\
        \displaystyle{-\frac{x(1-x)}{4} \left( \frac{\omega-m\omega_c}{\kappa_c} - \frac{\omega-m\omega_+}{\kappa_+} \right)^2} & \text{as $\tilde z \to +\infty$,}
    \end{cases}
\end{equation}
consistent with the asymptotically oscillatory behaviors 
\eqref{QNMosc}, \eqref{AQNMosc} for (A)QNMs, replacing $\eta_0 \to \hat \eta_0$
and $\eta_x \to \hat \eta_x$ under the mass symmetry \eqref{MassSym}. 
The implication \eqref{QNMinference} therefore gives:

\begin{proposition}
\label{prop:2}
    The Heun equation in Schr\" odinger form \eqref{eq:HeunSch} with potential $V_{\tilde z}$ given by equations \eqref{eq:V}, \eqref{eq:tilV}
    has a QNM or AQNM (i.e. a solution $\tilde R$ with one of the  asymptotically oscillatory behaviors 
\eqref{QNMosc}, \eqref{AQNMosc}) if and only if the same equation 
with the new potential $\hat V_{\tilde z}$ given by \eqref{eq:tilVnew} has a QNM or AQNM (i.e. a solution $\tilde R$ with one of the  asymptotically oscillatory behaviors 
\eqref{QNMosc}, \eqref{AQNMosc} with $\eta_i$ replaced by $\hat \eta_i$ as in equation \eqref{MassSym}.)
\end{proposition}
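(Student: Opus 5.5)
The proposition follows by combining the Corollary (the $S_3\times\Z_2$ invariance of the (A)QNM property) with a check that the chain of substitutions $y\mapsto u\mapsto\tilde R$ maps (A)QNMs of the Heun equation bijectively onto solutions of \eqref{eq:HeunSch} with the oscillatory behaviours \eqref{QNMosc}, \eqref{AQNMosc}. We apply that check once for the original parameters and once for the transformed ones.

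\textbf{Step 1 (original equation).} We first show that a QNM of \eqref{eq:HeunIsomonodromy} on $(0,x)$ corresponds to a solution $\tilde R$ of \eqref{eq:HeunSch} obeying \eqref{QNMosc}, and conversely, and likewise for AQNMs. By \eqref{eq:y_Asymptotics}, a QNM is the solution that behaves like $z^{0}$ or $z^{\eta_0}$ at $z=0$, with the appropriate branch, and like the complementary branch at $z=x$. The prefactor $z^{\frac12(1-\eta_0)}\cdots$ turns these into the stated behaviours of $u$. Near $z=0$ we have $Z\simeq xz$, so $\tilde z\simeq x^{-1}\log z\to-\infty$. Then $\tilde R=Z^{-1/2}u\sim z^{-\eta_0/2}=e^{-\frac12 x\eta_0\tilde z}$, which is \eqref{QNMosc}. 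Near $z=x$ we have $Z\simeq x(x-1)(z-x)$, and the analogous computation gives the $\tilde z\to+\infty$ behaviour. Since the substitutions are invertible and multiplicative, the correspondence is bijective. This holds verbatim for any parameter values $(\eta_i,k_i,K_x)$, because only the local exponents enter.

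\textbf{Step 2 (transformed equation).} We apply the Corollary with the transposition $m_2\leftrightarrow m_3$, which lies in $S_3\times\Z_2$. The radial Heun equation with parameters $(\btheta,K_x)$ has a QNM or AQNM if and only if the equation with $(\hat\btheta,\hat K_x)$ does. Here the hatted quantities are given by \eqref{MassSym}, by $\hat k_{1,2}$, and by \eqref{eq:Kchange1}. We then use Step 1 for the hatted Heun equation. Its Schr\"odinger form has the potential obtained by inserting the hatted data into \eqref{eq:V}, \eqref{eq:tilV}, and by the computation cited from \cite{nb} this is exactly \eqref{eq:tilVnew}. Its (A)QNMs correspond to solutions with the asymptotics \eqref{QNMosc}, \eqref{AQNMosc} with $\eta_i\to\hat\eta_i$. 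As a consistency check, the limits \eqref{Vasympt} match $-\frac14 x^2\hat\eta_0^2$ and $-\frac14 x^2(x-1)^2\hat\eta_x^2$ after the appropriate normalisation.

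\textbf{Main obstacle: exceptional parameters.} The Corollary and the equivalence \eqref{eq:thetaCondition} were derived under the genericity conditions \eqref{eq:Constraints} and $\theta_0,\theta_t\notin\Z$. The hatted $\hat\eta_0,\hat\eta_x$ may also be integers, where logarithmic solutions appear and the QNM/AQNM distinction degenerates. I would handle this first by restricting to the generic set, which is open and dense in $\omega$ for fixed $m,s$. There I would state the equivalence, and then extend it by continuity, following the paper's remark that such exceptional cases are treated by continuity. I would also verify that the hatted $\hat\eta_0,\hat\eta_x$ are nonzero away from this set, so that \eqref{QNMosc} and \eqref{AQNMosc} genuinely distinguish the two solutions.
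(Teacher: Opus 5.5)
Your proposal is correct and follows the paper's own route: the paper reads the proposition off directly from the Corollary for the transposition $m_2\leftrightarrow m_3$. It combines this with the substitutions $y\mapsto u\mapsto\tilde R$ and the computer-assisted evaluation of the transformed potential \eqref{eq:tilVnew}, and it treats non-generic parameters with the same continuity remark.

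One caution for your Step 1 concerns the $\tilde z\to+\infty$ end, which you did not carry out. With $Z\simeq x(x-1)(z-x)$, the branch $u\sim(z-x)^{\frac12\eta_x+\frac12}$ gives $\tilde R\sim e^{+\frac12 x(x-1)\eta_x\tilde z}$. This is the opposite sign to the printed \eqref{QNMosc}. Correspondingly, the limits of $\hat V_{\tilde z}$ are $+\frac14 x^2\hat\eta_0^2$ and $+\frac14 x^2(x-1)^2\hat\eta_x^2$, not the signs you wrote. That exponent has to be corrected so the two ends pair into a genuine QNM/AQNM. Read literally, the printed condition is a mixed one, which would not be invariant under the mass symmetry.
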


By the usual symmetries of the radial Teukolsky equation, if $\omega$ is a QNM frequency for $(m,s)$, then $\bar \omega$ is an AQNM frequency for $(m,-s)$
\cite{Teukolsky:1973ha}. Thus, for real frequencies, proposition \ref{prop:2} gives a complete equivalence of the QNM spectrum under the mass symmetry. In their analysis of mode stability of Kerr-dS, \cite{CTdC} have considered similar, but not quite identical, mass symmetries as we have. Using their mass symmetries, \cite{CTdC}  have recast the radial Teukolsky equation into the form of a Schrödinger with real potential having the same asymptotic behavior as ours \eqref{Vasympt}. Using these properties, \cite{CTdC}  have shown that a real QNM frequency must satisfy
\ben
\label{marcritabound}
\hat \omega_c < \omega/m < \hat \omega_+, 
\een
where $\hat \omega_j$ are given by \eqref{homegadef}. 
Thus, by proposition \ref{prop:2}, our method reproduces the restriction \eqref{marcritabound} by \cite{CTdC}.

For non-real frequencies, proposition \ref{prop:2} does not yield an equivalence of the QNM spectrum alone under the mass symmetry because a QNM frequency before 
the mass symmetry transformation could correspond to an AQNM afterwards. This is because our condition on the monodromy parameters $\sigma \equiv \sigma_{0t}$ 
\label{eq:thetaCondition} does not distinguish QNMs and AQNMs. In this regard, our method is inferior, e.g., to the method by \cite{Umetsu} via integral transformations, which gives an equivalence of the QNM spectrum under the mass symmetries.

\section{Geometry of mass symmetry}
\label{sec:geometry}
In this section we give a geometric interpretation of the mass symmetry transformation as transforming the Teukolsky master equation to  
a scalar wave equation on a ``dual" spacetime. In fact, for the case of Kerr-dS, this dual spacetime again turns out to be a stationary black hole spacetime with a pair of bifurcate Killing horizons, and the potential is {\it real, positive definite} in a neighborhood including the region exterior to the black hole region. However, before we come to Kerr-dS, 
we first explain the nature of the dual spacetime in the case of Kerr, and its connection with the Kerr mode stability problem \cite{Whiting}. 

\subsection{Whiting's metric and mode stability of Kerr}

The sub-extremal Kerr black hole spacetime corresponds to 
the limit as $\Lambda \to 0$ at fixed $M,a$ in \eqref{BLKdS}. In this limit, the root $r_c \to \infty$ and $\Delta_r \to \Delta$, where
\ben
\Delta = (r-r_+)(r-r_-) = r^2 -2Mr + a^2.
\een
Let us set 
\ben
\label{FGdef}
\begin{split}
\epsilon& = \frac{r_+-r_-}{r_++r_-}= \sqrt{1-a^2/M^2},\\
F &= \left[ \frac{(r-r_-)^2}{(r-r_+)^2}  - \frac{a^2(r-M)^2}{M^2\Delta} \right] \frac{\Delta}{\epsilon^2}
+ a^2 \cos^2 \theta,\\
G &= a\left(\cos \theta + \frac{r-M}{\epsilon M} \right),
\end{split}
\een
In the context of his transformation theory for the Teukolsky equations on Kerr, \cite{Whiting} has implicitly introduced a certain 
metric, $\hat g_{ab}$ which is 
defined for $r>r_+$ as 
\ben
\label{Whitingds2}
\ud\hat s^2 = \sqrt{\Delta} G \sin \theta \left( \frac{-2G\ud t \ud\phi + F\ud\phi^2}{G^2}
+ \frac{\ud r^2}{\Delta}  + \ud \theta^2   \right) . 
\een
$T^a = (\partial_t)^a$ clearly is a null Killing VF of $\hat g_{ab}$. For $r>r_+$ we have $F,G>0$, so 
\eqref{Whitingds2} defines a smooth Lorentzian metric on the manifold ${\mathbb R}_t \times (0,\pi)_\theta \times (r_+,\infty)_r \times S^1_\phi$.
A number of other curious and non-obvious geometric properties of \eqref{Whitingds2} are pointed out for the benefit of an interested reader 
in appendix \ref{app:A}, but they are not needed for the discussion of mode stability.

The connection between Whiting's metric \eqref{Whitingds2} and the Kerr mode stability problem is as follows \cite{Whiting}. 
Let $\Psi = R(r)S(\theta)e^{-i\omega t+im\phi}$ be a putative QNM solution to the spin-$s$ Teukolsky master equation in Kerr in BL coordinates such that 
$\Imag(\omega)>0$, i.e., mode stability would not hold. By applying suitable integral transformations to $R,S$, \cite{Whiting} obtains 
new $\hat R, \hat S$, such that $\hat \Psi = \hat R(r)\hat S(\theta)e^{-i\omega t+im\phi}$ satisfies an equation which can be easily reformulated as a 
Klein-Gordon equation
\ben
\label{hKG}
(-\hat g^{ab} \hat \nabla_a \hat \nabla_b + \hat V) \hat \Psi=0, 
\een
in the spacetime \eqref{Whitingds2}, where $\hat V$ is the potential 
\ben
\label{hatVKerr}
\hat V = \frac{s^2}{G\sqrt{\Delta} \sin \theta} \left( \frac{1-\cos \theta}{1+\cos \theta} + \frac{r-r_+}{r-r_-} \right) .
\een
The transformations are furthermore such that $\hat \Psi$ satisfies the boundary conditions of a QNM (in the new spacetime). 

Since $\hat V$ is real-valued, the  Klein-Gordon equation \eqref{hKG} follows from an action principle. This action principle 
yields a conserved Noether current (or, by Hodge duality, closed 3-form)
for each Killing field since they also Lie-derive $\hat V$. These currents can in fact be
 written in terms of the non-conserved canonical stress energy tensor,
\ben
\label{Tabdef}
\hat T_{ab} = \hat \nabla_a \hat \Psi^* \hat \nabla_b \hat \Psi - \frac{1}{2} \hat g_{ab}( \hat g^{cd} \hat \nabla_c \hat \Psi^* \hat \nabla_d \hat \Psi + \hat V |\hat \Psi|^2 )
\een
satisfying the dominant energy condition for $r \in (r_+,\infty)$  in view of $\hat V>0$. The Noether current 3-form conjugate to the KVF $X^a \in {\rm span}\{ T^a, \Phi^a \}$ is, 
\ben
\label{Jdef}
\hat J_{abc}[X] = \hat T_c{}^d X^c \hat \epsilon_{dabc}, 
\een
so that $\ud \hat {\boldsymbol J}[X] = 0$ when \eqref{hKG} holds. For $X^a = T^a$, the pull-back of $\hat {\boldsymbol J}[T]$ to any non-timelike subspace in a suitably oriented co-tangent space 
has a non-negative density by the dominant energy condition satisfied by $\hat T_{ab}$. This applies in particular to the co-tangent space of a constant $t$ hypersurface. 
From the QNM boundary conditions and the properties of Whiting's transformations, 
it follows that the integral 
\ben
\label{Eint}
E[\Sigma,X] = \int_\Sigma \hat {\boldsymbol J}[X]
\een
where $\Sigma$ is any constant $t$ surface (with this orientation understood) and where $X^a = T^a$, is (a) conserved and absolutely convergent, due to the decay of 
$\hat \Psi$ as $r \to r_+,\infty$ and $\theta \to 0, \pi$, and (b) coercive.
This is clearly incompatible with $\Imag(\omega)>0$ unless $\hat \Psi=0$, completing the proof by \cite{Whiting} 
of mode stability of Kerr. 

\subsection{Dual spacetime of Kerr-dS}

We now outline the the counterpart of these considerations in Kerr-dS, with the aim of providing a geometric understanding of the expressions 
\eqref{homegadef} and \eqref{hkappdef}, and of the bound \eqref{marcritabound} on real QNM frequencies. 
Consider a QNM solution $\Psi = R(r)S(\theta)e^{-i\omega t+im\phi}$ to the spin-$s$ Teukolsky master equation in Kerr-dS in our BL-like coordinates
\eqref{BLKdS}. In other words, at $r=r_+$, the radial solution is purely downgoing, and at $r=r_c$, it is purely outgoing; the precise meanings of these notions were recalled in 
\eqref{Rbcs} in sec. \ref{Heunradialsec}. In order to avoid awkward case distinctions in some formulas below, we assume without a loss of generality  in this and the following subsections that $m,s$ have the same sign; the other case can be reduced to this one by the transformation $\theta \to \pi -\theta$. 

We now:
\begin{itemize} 
\item change from $r$ to the coordinate $z$ \eqref{zr},
\item apply the mass symmetry transformation $R \to \hat R$ described in sec. \ref{sec:Rtransf} to the radial spin-$s$ Teukolsky master equation in the coordinate $z$, 
\item apply the differential transformation $S \to \hat S$ to the angular spin-$s$ Teukolsky master equation, described in \cite{Umetsu}, 
\item revert from the coordinate $z$ \eqref{zr} back to the coordinate $r$,
\item undo the separation of variables by recombining the transformed angular and radial functions into 
$\hat \Psi = \hat R(r) \hat S(\theta)e^{-i\omega t+im\phi}$.
\end{itemize}
Then, by elementary but very lengthy automated computations \cite{nb}, the new wave equation satisfied by $\hat \Psi$  can be re-written in 
covariant form \eqref{hKG}, for a Kerr-dS analogue of Whiting's metric $\hat g_{ab}$ \eqref{Whitingds2}, and a new potential $\hat V$. 
The line element of this ``dual'' metric that can be extracted is\footnote{Apparently our metric is different from what would be extracted 
from the wave equation given by \cite{Umetsu}. Since our mass symmetry transformations differ somewhat from the
integral transformations as used by \cite{Umetsu}, we have found it difficult to track the origin of this difference.}
\ben
\label{dsWKdS}
\ud\hat s^2 = \sqrt{\Delta_r (1+\alpha\cos^2 \theta)D} \sin \theta \left( \frac{
-H \ud t^2 - 2G\ud\phi \ud t + F\ud\phi^2 
}{D}+ \frac{\ud r^2}{\Delta_r} + \frac{\ud\theta^2}{1+\alpha\cos^2 \theta} 
\right).
\een
Here, $F,G,H$ are the functions of $(r,\theta)$, but not $(m,\omega)$ whose lengthy definitions are given
in appendix \ref{app:B} referring to \cite{nb}. In the case of $F,G$ these have a finite limit as $r_c \to \infty$, and reduce to those in Kerr \eqref{FGdef} in that limit, 
whereas  $H \to 0$. $-D$ is the determinant of the $\phi$-$t$ part, i.e. 
\ben
D=G^2+FH,
\een
which is given in appendix \ref{app:B} based on a very lengthy automated computation \cite{nb}.
For $r \in (r_+,r_c)$, we have that $D>0$ by lem. \ref{lem1}, 
expressing that $\hat g_{ab}$ has the Lorentzian signature in this region. The potential in the KG equation \eqref{hKG} becomes
\ben
\label{hV}
\hat V = \frac{P}{\sqrt{\Delta_r D (1+\alpha \cos^2 \theta)} \sin \theta} 
\een
in Kerr-dS, 
where $P$ is the function of $(r,\theta)$, but not $(m,\omega)$, given in appendix \ref{app:B}. By lem. \ref{lem2} there exists an $r'_c>r_c$ such that 
$\hat V > 0$ for all $r_o<r \le r_c'$. 

We now construct an extension of the spacetime beyond the original domain $r \in (r_+,r_c), \theta \in (0,\pi)$ in such a way that 
$\theta = 0, \pi$ represent time-like conformal boundaries, and $r=r_c, r_+$ represent bifurcate Killing horizons in the extension. 
For this, it is critical to understand the behaviors of $D(r,\theta), \Delta_r(r)$. By definition \eqref{Deltardef} $\Delta_r$ is a fourth order polynomial which 
has simple zeros at $r_o, r_-, r_+, r_c$. We show in lem. \ref{lem1} that there exists an $r_c'>r_c$ and 
a function $D'(r,\theta)$ such that $0 < D' < \infty$ for all $r \le r_c'$, and such that $D = -D'/[(r-r_c)(r-r_+)(r-r_-)]$.\footnote{
This is quite non-trivial e.g., because near either of the zeros $r_j, j \in \{-,+,c\}$, 
the formulas in \ref{app:B} show that  $F,G,H = \mathcal{O}((r-r_j)^{-1})$, so one would naively 
expect a different behavior for $D$ as $r \to r_j$.}
In particular, it follows from this that $\Delta_r D$ appearing under square-roots extends as a strictly positive smooth function 
to $r_o<r \le r_c'$.  

Formally, the norms of the Killing VFs $\hat K_j^a = T^a + \hat \omega_j \Phi^a$, where 
\ben
\label{hom}
\hat \omega_j = \frac{G}{F} \bigg|_{r=r_j}
\een
are such that $\hat g_{ab} \hat K_j^a  \hat K_j^b \to 0$ as $r \to r_j, j \in \{c,+,-\}$, suggesting that the coordinate locations $r=r_j, j \in \{c,+,-\}$ may 
represent horizons in an extension.  This is in fact the case. To construct this extension, we first introduce the $2\pi$-periodic ``corotating" coordinate
\ben
\hat \phi_j := \phi - \hat \omega_j t
\een
such that $\hat K_j^a = (\partial_t)^a$ in the new coordinates $(t,r,\theta,\hat \phi_j)$ near $r=r_j$ for each $j \in \{+,c,-\}$.
In a similar way as when introducing Kruskal-like coordinates for Kerr-dS, we next define a tortoise-like coordinate $\hat r_*$ by
\ben
\hat r_* = \frac{1}{2 \hat \kappa_+} \log |r - r_+| 
- \frac{1}{2 \hat \kappa_c} \log |r_c - r| 
 - \frac{1}{2 \hat \kappa_-} \log |r - r_-|.
\een
Here the modified surface gravities $\hat \kappa_j, j \in \{+,-,c\}$ are given by
\ben
\label{kappadef}
\hat \kappa_j = \frac{|\partial_r \Delta_r|}{2\sqrt{\Delta_r F}} \bigg|_{r=r_j}.
\een
Then we define the retarded and advanced coordinates
\ben
\hat u = t-\hat r_*, \quad 
\hat v = t+\hat r_*.
\een
To construct the extension across $r_+$, we define near $r=r_+$ 
the Kruskal-type coordinates 
\ben
\hat U_+ = -e^{-\hat \kappa_+ \hat u}, \quad \hat V_+ = e^{\hat \kappa_+ \hat v}. 
\een
Then we have $r-r_+ = -\hat U_+ \hat V_+(1+{\mathcal O}(\hat U_+ \hat V_+))$, 
$F/D, H/D, G/D = {\mathcal O}(1)$ with respect to the coordinates $\hat U_+, \hat V_+$ and, with the choice \eqref{kappadef}
of $\hat \kappa_+$, 
the metric $\hat g_{ab}$ is seen to possess an analytic continuation when expressed in our Kruskal-like coordinates across $r=r_+$. 
The Kruskal-like coordinates $x^\mu = (\hat U_+, \hat V_+, \hat \phi_+, \theta)$ cover the regions ${\rm I,II,III,IV}$ in fig. \ref{fig:1}.

A similar procedure can be 
applied to obtain an extension across  $r=r_c$ by defining 
\ben
\hat U_c = e^{\hat \kappa_c u},\quad  \hat V_c = - e^{-\hat \kappa_c \hat v}.
\een 
These regions are depicted as ${\rm II', IV', III'}$ in fig. \ref{fig:1}. The region $\rm III'$ is isometric to the region $\rm III$ and may thus be identified, as we have indicated in 
fig. \ref{fig:1}. Finally, suitably adapted procedure also works for $r=r_-$. 

In the Kruskal-like coordinates $(\hat V_j, \hat U_j, \theta, \hat \phi_j)$ the coordinate locations $r=r_j$ are bifurcate Killing horizons with surface gravities $\hat \kappa_j$ for each $j \in \{+,c, -\}$. In fact, the horizon Killing fields $\hat K_j^a = T^a + \hat \omega_j \Phi^a$ take the usual form in the Kruskal-like coordinates:
\ben
\label{hatKj}
\hat K^a_j \equiv T^a + \hat \omega_j \Phi^a = \hat \kappa_j \bigg(\hat U_j \partial_{\hat U_j} - \hat V_j \partial_{\hat V_j} \bigg)^a, \quad j \in \{c,+,-\},
\een
At $r = r_o$, $D$ remains non-zero, whereas $\Delta_r$ changes its sign. This means that the conformal factor in \eqref{dsWKdS} cannot be extended as 
a real function beyond $r=r_o$. In fact, one can see that $r=r_o$ represents a singularity. Likewise, one can see that the metric must have a singularity somewhere 
between $r_c'$ and $\infty$.

To summarize, we have the following theorem:
\begin{theorem} 
There is an analytic extension $(\hat {\mathcal M}, \hat g_{ab})$ of the dual spacetime metric \eqref{dsWKdS} to $r_o<r<r_c'$ for some $r_c'>r_c$.
The manifold structure is $\hat {\mathcal M} = \hat {\mathcal P} \times (0,\pi)_\theta \times S^1_\phi$, 
where $\hat {\mathcal P}$ is qualitatively given by the  Penrose diagrams fig. \ref{fig:1}.
This spacetimes has bifurcate Killing horizons at $r = r_j, j \in \{+, -, c\}$ whose surface gravities $\hat \kappa_j$ and angular velocities $\hat \omega_j$
are  \eqref{homegadef} and \eqref{hkappdef}. 

For a real QNM frequency, the transformed Teukolsky master equation for $\hat \Psi$ takes the form of a wave equation \eqref{hKG}
on $(\hat {\mathcal M}, \hat g_{ab})$ with a real-valued potential $\hat V$ given by \eqref{hV}. There exists an $r'_c>r_c$ such that 
$\hat V > 0$ for all $r \le r_c'$. In this domain, the stress tensor \eqref{Tabdef} therefore satisfies the dominant energy condition.
\end{theorem}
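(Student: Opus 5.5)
The theorem collects the facts assembled in this subsection, so the plan is to verify each claim in turn, taking as given the explicit expressions for $F,G,H,D,P$ in appendix \ref{app:B} and the two lemmas cited there. The order is: Lorentzian signature and regularity on $r_o<r<r_c'$, then the Kruskal extensions and horizon data, then the global Penrose structure, and finally the wave-equation and positivity statements.

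\emph{Step 1 (signature and regularity).} By lem. \ref{lem1} we may write
\begin{equation}
D=-\frac{D'}{(r-r_c)(r-r_+)(r-r_-)}, \qquad 0<D'<\infty \ \text{ for } r\le r_c' .
\end{equation}
Combined with \eqref{Deltardef}, this gives $\Delta_r D = L^{-2}(r-r_o)D'$. This product is smooth, analytic and strictly positive for $r_o<r<r_c'$. Hence the conformal factor $\sqrt{\Delta_r D(1+\alpha\cos^2\theta)}\sin\theta$ is analytic and positive there, and the $t$-$\phi$ block of \eqref{dsWKdS} has determinant $-D$ times a positive factor. On each interval between consecutive roots this block is therefore Lorentzian or Riemannian according to the sign of $D$. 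In the intervals where $D<0$, we have $\Delta_r<0$ as well, so $r$ is timelike, and the signature remains $(-,+,+,+)$.

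\emph{Step 2 (extension across $r_j$).} Fix $j\in\{+,c,-\}$ and pass to $(t,r,\theta,\hat\phi_j)$. Expand $F/D$, $G/D$, $H/D$ near $r_j$; these are $\mathcal O(1)$ by the remark in the footnote after lem. \ref{lem1}. The $t$-$t$ component of $\hat g$ in the corotating frame is then $-(H+2\hat\omega_jG-\hat\omega_j^2F)/D$ times the conformal factor. I will check that the numerator, divided by $D$, vanishes to first order in $r-r_j$ when $\hat\omega_j=G/F|_{r_j}$. For this I use $D=G^2+FH$, which gives $H+2\hat\omega_jG-\hat\omega_j^2F = D/F-(G-\hat\omega_jF)^2/F$. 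With this identity the standard Kerr-type argument applies. In the null coordinates $\hat u,\hat v$, the $t$-$r$ part becomes a factor times $\ud\hat u\,\ud\hat v$. The choice \eqref{kappadef} of $\hat\kappa_j$ is exactly what makes $(r-r_j)/(\hat U_j\hat V_j)$ analytic and nonvanishing. The cross terms $\ud t\,\ud\hat\phi_j$ carry a factor $(G-\hat\omega_jF)/D=\mathcal O(r-r_j)$, which combines with $\ud t=(\ud\hat u+\ud\hat v)/2$ into analytic $\hat U_j\,\ud\hat V_j$ and $\hat V_j\,\ud\hat U_j$ terms.

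The residual analytic dependence on $r$ requires $r$ to be an analytic function of $\hat U_j\hat V_j$. This follows from the inverse function theorem applied to $\hat U_j\hat V_j=\mp e^{2\hat\kappa_j\hat r_*}$, which has a simple zero at $r_j$. The relation \eqref{hatKj} then follows by direct differentiation, and the surface gravity is read off as $\hat\kappa_j$. Evaluating $G/F$ and \eqref{kappadef} at the roots using the explicit appendix formulas yields \eqref{homegadef} and \eqref{hkappdef}.

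\emph{Step 3 (global structure).} Patching the charts gives the diagram of fig. \ref{fig:1}, with the identification $\rm III'\cong III$ justified because both are described by the same metric on the same $r$-interval. The poles $\theta=0,\pi$ are excluded because the conformal factor vanishes there like $\sin\theta$. This is why $\hat{\mathcal M}$ has the factor $(0,\pi)_\theta$ and why these poles appear as conformal boundaries.

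\emph{Step 4 (wave equation and positivity).} For real $\omega$, Proposition \ref{prop:2} together with the reality of $\hat V_{\tilde z}$ gives a real radial operator. The angular transformation of \cite{Umetsu} does the same on the angular side. Recombining, the transformed equation is \eqref{hKG} with $\hat V$ as in \eqref{hV}. Since $P$ is independent of $(m,\omega)$, $\hat V$ is real. Its positivity for $r\le r_c'$ is lem. \ref{lem2}, since the denominator is positive by Step 1. Given $\hat V>0$, the dominant energy condition for \eqref{Tabdef} is standard: for future causal $X,Y$,
\begin{equation}
\hat T_{ab}X^aY^b=\mathrm{Re}\big[(X\cdot\hat\nabla\hat\Psi^*)(Y\cdot\hat\nabla\hat\Psi)\big]-\tfrac12(X\cdot Y)\big(|\hat\nabla\hat\Psi|^2+\hat V|\hat\Psi|^2\big),
\end{equation}
and this is nonnegative by the usual decomposition in an orthonormal frame.

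The main obstacle is Step 2, specifically showing that all components in Kruskal coordinates are analytic rather than merely bounded. This depends on the fine cancellations in $G-\hat\omega_jF$ and on $D$ having exactly a simple pole, and it must ultimately be checked against the explicit formulas of appendix \ref{app:B}. I would first try to reduce it to the algebraic identity above together with lem. \ref{lem1}.
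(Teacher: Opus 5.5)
\paragraph{Overall assessment.}
Your outline follows the paper's own argument step for step:
\begin{itemize}
\item Lemma \ref{lem1}, giving $\Delta_r D=L^{-2}(r-r_o)D'>0$;
\item corotating coordinates with $\hat\omega_j=G/F|_{r_j}$;
\item Kruskal coordinates built from $\hat r_*$;
\item the closed forms \eqref{homegadef}, \eqref{hkappdef} read off from $F,G$;
\item Lemma \ref{lem2} for $\hat V>0$, followed by the standard dominant energy condition argument.
\end{itemize}
Your corotating analysis, using $H+2\hat\omega_jG-\hat\omega_j^2F=D/F-(G-\hat\omega_jF)^2/F$ together with $(G-\hat\omega_jF)/D=O(r-r_j)$, correctly handles the $\ud t\,\ud\hat\phi_j$ term and the $O((r-r_j)^2)\,\ud t^2$ remainder.

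\paragraph{The gap: the role of $\hat\kappa_j$ in Step 2.}
The gap sits exactly where the choice of $\hat\kappa_j$ matters, and you describe it incorrectly.

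First, $(r-r_j)/(\hat U_j\hat V_j)$ is analytic and nonvanishing for \emph{any} value of $\hat\kappa_j$, provided the same constant appears in $\hat r_*$ and in $\hat U_j,\hat V_j$. So \eqref{kappadef} plays no role there.

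Second, the $t$-$r$ part is not a multiple of $\ud\hat u\,\ud\hat v$. The reason is that $\hat r_*$ depends on $r$ alone, while $F$ depends on $\theta$. What is true is
\begin{equation*}
-\frac{\ud t^2}{F}+\frac{\ud r^2}{\Delta_r}=-\frac{1}{F}\,\ud\hat u\,\ud\hat v+\frac{1}{F}\Big(\frac{F}{\Delta_r}-(\partial_r\hat r_*)^2\Big)\ud r^2 .
\end{equation*}
The first term is fine, because $1/(F\hat U_j\hat V_j)$ is analytic and nonzero at $r_j$. Both $F/\Delta_r$ and $(\partial_r\hat r_*)^2$ have double poles at $r_j$. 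Unless these cancel, the last term behaves like $(r-r_j)^{-1}\ud r^2\propto(\hat U_j\ud\hat V_j+\hat V_j\ud\hat U_j)^2/(\hat U_j\hat V_j)$, which blows up on both horizon branches. The cancellation holds iff
\begin{equation*}
\Delta_rF|_{r=r_j}=(\partial_r\Delta_r)^2/(4\hat\kappa_j^2)|_{r=r_j},
\end{equation*}
which is precisely \eqref{kappadef}. This is the role the paper assigns to the choice of $\hat\kappa_j$ when it asserts the analytic continuation. As written, your Step 2 never controls this $\ud r^2$ remainder.

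\paragraph{What must be checked from the explicit formulas.}
This also sharpens the list of facts you need from $F,G$. Besides the cancellation in $G-\hat\omega_jF$ and the simple pole of $D$, you need:
\begin{itemize}
\item $F$ has exactly a simple pole at $r_j$, with $\Delta_rF|_{r_j}>0$;
\item both $\Delta_rF|_{r_j}$ and $G/F|_{r_j}$ are independent of $\theta$.
\end{itemize}
A priori these are functions of $\theta$. If they were not constant, no tortoise coordinate depending on $r$ alone, and no constant $\hat\omega_j$, could work. The $\theta$-independent closed forms \eqref{homegadef}, \eqref{hkappdef} are what certify this.

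\paragraph{A smaller point in Step 1.}
Where $D<0$, the $t$-$\phi$ block has positive determinant, so it is definite. You still have to exclude negative definiteness, which would give signature $(-,-,-,+)$. This follows from $F/D=(\Delta_rF)/(\Delta_rD)>0$ near $r_j$; that is the same positivity needed for $\hat\kappa_j$ to be real. Alternatively, it follows from nondegeneracy of the analytically extended metric across the horizons.
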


		\begin{figure}
		\centering
			\begin{tikzpicture}[scale=1]
			\draw (0,4) -- (-2,6) node[midway, below, sloped]{\footnotesize $r=r_-$};
\draw (-2,6) -- (-4,8) node[midway, below, sloped]{\footnotesize $r=r_-$};	
\draw (-2,6) -- (0,8) node[midway, below, sloped]{\footnotesize $r=r_-$};
\draw (-2,6) -- (-4,4) node[midway, below, sloped]{\footnotesize $r=r_-$};		
\draw (0,8) -- (-2,10) node[midway, below, sloped]{\footnotesize $r=r_+$};
\draw (-2,10) -- (-4,8) node[midway, below, sloped]{\footnotesize $r=r_+$};		
\draw (0,0) -- (-2,2) node[midway, below, sloped]{\footnotesize $r=r_+$};
\draw (-2,2) -- (-4,4) node[midway, below, sloped]{\footnotesize $r=r_+$};
\draw (0,0) -- (2,2) node[midway, below, sloped]{\footnotesize $r=r_c$};
\draw (2,2) -- (0,4) node[midway, below, sloped]{\footnotesize $r=r_c$};
\draw (-2,2) -- (0,4) node[midway, below, sloped]{\footnotesize $r=r_+$};
\draw[black, thick, dashed,->] (-7,2) -- (-6.5,2);
\draw[black, thick,dashed, ->] (-2,11.5) -- (-2,10.5);
\draw[black, thick,dashed] (-2,11.5) -- (-5,11.5);

\draw[black, thick,dashed] (-5,-2) -- (-5,-1.25);
\draw[black, thick,dashed] (-5,-0.75) -- (-5,.5);
\draw[black, thick,dashed] (-5,1.25) -- (-5,2.25);
\draw[black, thick,dashed] (-5,3.25) -- (-5,11.5);

\draw[black, thick,dashed] (-5,-2) -- (-2,-2) node[midway, below, black]{\footnotesize identify};
\draw[black, thick,dashed] (-2,-2) -- (-2,-1.25);
\draw[black, thick,dashed,->] (-2,-.75) -- (-2,1.5);
\draw[black, thick,dashed] (-7,2) -- (-7,-1);
\draw[black, thick,dashed] (-7,-1) -- (5,-1) node[midway, below, black]{\footnotesize identify};
\draw[black, thick,dashed] (5,-1) -- (5,2);
\draw[black, thick,dashed,->] (5,2) -- (2.5,2);
\draw[snake it] (0,4) -- (0,8) node[midway, below, sloped]{\footnotesize $r=r_o$};
\draw[snake it] (-4,4) -- (-4,8) node[midway, below, sloped]{\footnotesize $r=r_o$};
\draw[snake it] (0,4) -- (4,4) node[midway, below, sloped]{\footnotesize singularity};
\draw[snake it] (0,0) -- (4,0) node[midway, below, sloped]{\footnotesize singularity};
\draw (2,2) -- (4,4) node[midway, below, sloped]{\footnotesize $r=r_c$};
\draw (2,2) -- (4,0) node[midway, below, sloped]{\footnotesize $r=r_c$};
\draw (-2,2) -- (-4,0) node[midway, below, sloped]{\footnotesize $r=r_+$};
\draw (-4,0) -- (-6,2) node[midway, below, sloped]{\footnotesize $r=r_c$};
\draw (-4,4) -- (-6,2) node[midway, below, sloped]{\footnotesize $r=r_c$};

\draw[fill=white] (-4,4) circle (2pt);
\draw[fill=white] (-4,0) circle (2pt);
\draw[fill=white] (4,4) circle (2pt);
\draw[fill=white] (4,0) circle (2pt);
\draw[fill=white] (0,8) circle (2pt);
\draw[fill=white] (-4,8) circle (2pt);
\draw[fill=white] (0,0) circle (2pt) node[below]{$$};
\draw[fill=black] (-2,2) circle (2pt);
\draw[fill=black] (-6,2) circle (2pt);
\draw[fill=black] (-2,6) circle (2pt);
\draw[fill=black] (-2,10) circle (2pt);
\draw[fill=white] (0,4) circle (2pt) node[above right]{$$};
\draw[fill=black] (2,2) circle (2pt);
\draw (0,2) node{\footnotesize ${\rm I}$};
\draw (-4,2) node{\footnotesize ${\rm III} = {\rm III}'$};
\draw (-2,4) node{\footnotesize ${\rm II}$};
\draw (-2,0) node{\footnotesize ${\rm IV}$};
\draw (2,3) node{\footnotesize ${\rm II}'$};
\draw (2,1) node{\footnotesize ${\rm IV}'$};
\draw (-2,8) node{\footnotesize ${\rm IV}''={\rm IV}$};
\end{tikzpicture}
\caption{The Penrose diagram $\hat {\mathcal P}$ of the dual spacetime.}
		\label{fig:1}
		\end{figure}

A lengthy computation using the expressions for $F,G$ in app. \ref{app:B} 
shows that the ``dual angular velocity'' parameters" \eqref{hom} are \cite{nb}\footnote{To compare these expressions, one has 
to use the definitions of the variables given in app. \ref{app:B}.}
\ben
\label{homegadef}
\begin{split}
\hat \omega_c =& 
\frac{2a}{2a^2+r_c(r_c+r_-) + r_+(r_c-r_-)}
= \frac{ \omega_c\kappa_c^{-1} - \omega_-\kappa_-^{-1}}{\kappa_c^{-1} - \kappa_-^{-1}}
,\\
\hat \omega_+=&
\frac{2a}{2a^2+r_c(r_+-r_-) + r_+(r_-+r_+)}
= \frac{ \omega_+\kappa_+^{-1} + \omega_-\kappa_-^{-1}}{\kappa_+^{-1} + \kappa_-^{-1}}.
\end{split}
\een
and ``dual surface gravities" \eqref{kappadef} are \cite{nb}
\ben
\label{hkappdef}
\begin{split}
\hat \kappa_c =& \frac{
(r_c-r_+)(r_+-r_o)(r_c-r_-)(r_--r_o)}{2L^2(1+\alpha)(r_++r_-)(2a^2-r_+r_--r_cr_o)} = (\kappa_c^{-1} - \kappa_-^{-1})^{-1},\\
\hat \kappa_+=& \frac{(r_c-r_+)(r_c-r_o)(r_+-r_-)(r_--r_o)
}{2L^2(1+\alpha)(r_c+r_-)(2a^2 -r_cr_--r_+r_o)} =
(\kappa_+^{-1} + \kappa_-^{-1})^{-1}, 
\end{split}
\een

\subsection{Partial mode stability of Kerr-dS}

We now reconsider the mode stability problem of the spin-$s$ Teukolsky equation from the geometric setup of the dual spacetime. For this, we consider the conserved current 3-form \eqref{Jdef} associated with a QNM $\hat \Psi$ on the region $\rm I$ between $r_+$ and $r_c$
of the dual spacetime $(\hat {\mathcal M}, \hat g_{ab})$, see fig. \ref{fig:1}. Then we form the conserved energy \eqref{Eint} 
for suitable surfaces $\hat \Sigma$, and for suitable Killing VFs $X^a$. For an explicit expression of the current 3-form \eqref{Jdef} when $X^a = T^a$, see appendix \ref{app:e}.

The radial mode functions in the dual spacetime (i.e., after the mass symmetry transformation, see sec. \ref{sec:Rtransf}) satisfy
\ben
\label{hRbc}
\hat R(r) = 
\begin{cases}
A_+ (r-r_+)^{\hat\eta_0/2}(1+\mathcal{O}(r-r_+)) & \text{as $r \to r_+$,}\\
A_c \, (r_c-r)^{\hat\eta_x/2} \, (1+\mathcal{O}(r-r_c)) & \text{as $r \to r_c$,}
\end{cases}
\een
where the parameters $\hat \eta_x, \hat \eta_0$ are given below in \eqref{MassSym}. The angular functions 
satisfy
\ben
\label{hSbc}
\hat S(\theta) = 
\begin{cases}
B_0 \theta^{|s|} (1+\mathcal{O}(\theta)) & \text{as $\theta \to 0$,}\\
B_\pi (1+\mathcal{O}(\pi-\theta)) & \text{as $\theta \to \pi$}
\end{cases}
\een
after a differential transformation described in \cite{Umetsu}.
Consequently, the Teukolsky scalar $\hat \Psi = \hat R(r) \hat S(\theta) e^{-i\omega t+im\phi}$ has the asymptotic behavior 
\ben
\label{hPsiasmpt}
\hat \Psi = 
\begin{cases}
A_+ \hat V_+^{-i(\omega - m\hat \omega_+)/\hat \kappa_+}(1 + \mathcal{O}(U_+V_+)) e^{im\hat \phi_+} & \text{as $r \to r_+$}\\
A_c \,\,  \hat U_c^{-i(\omega - m\hat \omega_c)/\hat \kappa_c} \,\, (1 + \, \mathcal{O}(U_cV_c) \, ) \, e^{im\hat \phi_c} & \text{as $r \to r_c$}
\end{cases}
\times
\begin{cases}
B_0 \theta^{|s|} (1+\mathcal{O}(\theta)) & \text{as $\theta \to 0$,}\\
B_\pi (1+\mathcal{O}(\pi-\theta)) & \text{as $\theta \to \pi$.}
\end{cases}
\een
in the Kruskal-like coordinates near the bifurcate Killing horizons corresponding to $r=r_+,r_c$, respectively.

\subsubsection{Constant $t$ slices} 
\label{part1}
At first, we consider a putative QNM such that $\Imag(\omega) >0$, i.e. mode stability would not hold. We take surfaces $\hat\Sigma_0, \hat \Sigma_1$ in region I  in fig. \ref{fig:5} to be defined by fixing the BL time coordinate at $t=t_0, t=t_1$ respectively. Then the energy integrals $E[T,\hat\Sigma_0]$ and $E[T,\hat\Sigma_1]$ as in \eqref{Eint} converge absolutely in view of 
\eqref{hSbc}, \eqref{hRbc}, \eqref{hV}, and e.g., the formulas for the current 3-form \eqref{Jdef} in appendix \ref{app:e}. In fact, by current conservation, $\ud \hat {\boldsymbol J}[T]=0$, 
we must have $E[T,\hat\Sigma_0] = E[T,\hat\Sigma_1]$. On the other hand, the fact that for a mode, ${\pounds}_T \hat \Psi = -i\omega \hat \Psi$, and the definition of the current 
\eqref{Jdef} give 
\ben
{\pounds}_T \hat {\boldsymbol J}[X]= \Imag(\omega) \hat {\boldsymbol J}[X], 
\een
where $\pounds$ is the Lie-derivative. This gives $E[T,\hat\Sigma_0] e^{\Imag(\omega)(t_1-t_0)}= E[T,\hat\Sigma_1]$, which
is possible if and only if $E[T,\hat\Sigma_0]=0$. 

		\begin{figure}
		\centering
			\begin{tikzpicture}[scale=1]
\draw (0,0) -- (-2,2) node[midway, below, sloped]{\footnotesize $\hat {\mathcal H}^-_+$};
\draw (0,0) -- (2,2) node[midway, below, sloped]{\footnotesize $\hat {\mathcal H}^-_c$};
\draw (2,2) -- (0,4) node[midway, above, sloped]{\footnotesize $\hat {\mathcal H}^+_c$};
\draw (-2,2) -- (0,4) node[midway, above, sloped]{\footnotesize $\hat {\mathcal H}^+_+$};
\draw[red, thick] (-2,2) to[out=-30,in=-150] (2,2);
\draw[red, thick] (-2,2) to[out=30,in=150] (2,2);
\draw (0,2.3) node{\footnotesize $\hat \Sigma_1$};
\draw (0,1.7) node{\footnotesize $\hat \Sigma_0$};
\draw[fill=white] (0,0) circle (2pt) node[below]{$$};
\draw[fill=black] (-2,2) circle (2pt);
\draw[fill=white] (0,4) circle (2pt) node[above right]{$$};
\draw[fill=black] (2,2) circle (2pt);
\end{tikzpicture}
\caption{The region $\rm I$ between $r_+$ and $r_c$ of $\hat {\mathcal P}$ in fig. \ref{fig:1} and the constant $t$ surfaces $\hat \Sigma_0, \hat\Sigma_1$.}
		\label{fig:5}
		\end{figure}

Unlike in Kerr, the condition that $E[T,\hat\Sigma_0]=0$ does not necessarily imply that $\hat \Psi = 0$ in Kerr-dS, 
because the integrand of $E[T,\hat\Sigma_0]$ is not sign-definite. This is a consequence of 
the fact that, while the stress tensor $\hat T_{ab}$ \eqref{Tabdef} satisfies the dominant energy condition, the VF $T^a$ has ergo regions (where it is spacelike). In fact, any other Killing VF of the dual metric has ergo regions, too, so neither can be used to infer $\hat \Psi=0$ by this kind of argument for generic Kerr-dS spacetimes.

On the other hand, $T^a$ is globally time-like in Schwarzschild-dS (the limiting case $a \to 0$), so the integrand of $E[T,\hat\Sigma_0]$ is positive definite, allowing us to conclude 
$\hat \Psi=0$ in this case. Thus, we recover the known result that there are no QNMs with $\Imag(\omega)>0$ of the spin-$s$ Teukolsky equation in Schwarzschild-dS spacetime.

\subsubsection{Hyperboloidal slices}
\label{part2}
Next, we consider a putative QNM such that $\Imag(\omega) =0$, i.e. mode stability would not hold on the real axis in Kerr-dS. Differently from before, 
 we now consider a hyperboloidal slice $\hat \Sigma_0$ intersecting the future horizons as in the following fig. \ref{fig:7}. We assume that the intersection 
 of $\hat \Sigma_0$ with the future horizons $\hat {\mathcal H}^+_c, \hat {\mathcal H}^+_+$ is a constant value of the respective Kruskal-like coordinates, $\hat V_+, \hat U_c$. 
 We also consider a second hyperboloidal slice $\hat \Sigma_1$ intersecting the future horizons as in the following fig. \ref{fig:7} to the future of $\hat \Sigma_0$, in such a 
 way that $\hat \Sigma_0$ is Lie-transported into $\hat \Sigma_1$ under the time-translation Killing VF $T^a$, i.e., $\hat \Sigma_1 = \mathscr{F}_t[\hat \Sigma_0]$, where $\mathscr{F}_t$
 is the flow generated by $T^a$. 
 
 We now consider a Killing VF $X^a = T^a +b\Phi^a$, where $b$ is an arbitrary real number chosen momentarily, and we consider the 
 current $\hat {\boldsymbol J}[X]$ defined in \eqref{Jdef} and \eqref{Tabdef}. By considering the Kruskal-like coordinate components of the dual metric
 $\hat g_{ab}$ \eqref{dsWKdS} near $\hat {\mathcal H}^+_+$, and near $\hat {\mathcal H}^+_c$, and by using the 
 asymptotic formulas \eqref{hPsiasmpt} for our QNM $\hat \Psi$ in Kruskal-like coordinates near $\hat {\mathcal H}^\pm_+$, and near $\hat {\mathcal H}^\pm_c$, we 
 can then easily see that the integrals \eqref{Eint} defining $E[X,\hat \Sigma_0]$ and $E[X, \hat\Sigma_1]$ are absolutely convergent.

		\begin{figure}
		\centering
			\begin{tikzpicture}[scale=1]
\draw (0,0) -- (-2,2) node[midway, below, sloped]{\footnotesize $\hat {\mathcal H}^-_+$};
\draw (0,0) -- (2,2) node[midway, below, sloped]{\footnotesize $\hat {\mathcal H}^-_c$};
\draw (2,2) -- (0,4) node[midway, above, sloped]{\footnotesize $\hat {\mathcal H}^+_c$};
\draw (-2,2) -- (0,4) node[midway, above, sloped]{\footnotesize $\hat {\mathcal H}^+_+$};
\draw[red, thick] (-1.5,2.5) to[out=-30,in=-150] (1.5,2.5);
\draw[red, thick] (-1.7,2.3) to[out=-30,in=-150] (1.7,2.3);
\draw (0,2.3) node{\footnotesize $\hat \Sigma_1$};
\draw (0,1.6) node{\footnotesize $\hat \Sigma_0$};
\draw[fill=white] (0,0) circle (2pt) node[below]{$$};
\draw[fill=black] (-2,2) circle (2pt);
\draw[fill=white] (0,4) circle (2pt) node[above right]{$$};
\draw[fill=black] (2,2) circle (2pt);
\end{tikzpicture}
\caption{The region $I$ between $r_+$ and $r_c$ of $\hat {\mathcal P}$ in fig. \ref{fig:1} and the hyperboloidal surfaces $\hat \Sigma_0, \hat\Sigma_1$.}
		\label{fig:7}
		\end{figure}

 Next, we integrate the conservation condition $\ud\hat {\boldsymbol J}[X]=0$ over the spacetime volume enclosed by $\hat\Sigma_1, \hat\Sigma_0$ within region $I$, and 
 we apply Gauss' theorem to write this condition in terms of certain boundary integrals. These boundary integrals consist of (a) 
 two boundary components of the volume within $\hat {\mathcal H}^+_+$ and $\hat {\mathcal H}^+_c$, and (b) two boundary components 
 associated with the boundary of $S^1_\phi \times (0,\pi)_\theta$ in the transverse directions, corresponding to $\theta = 0,\pi$ respectively, (c) 
 the two boundaries $\Sigma_0, \Sigma_1$. 
 
 The contributions from (a) are given by the integrated fluxes $\hat T_{ab} X^a \hat K_+^b|_{{\mathcal H}^+_+}$
 and $\hat T_{ab} X^a \hat K_c^b|_{{\mathcal H}^+_c}$, respectively. The contributions from (b) vanish due to the factors of $\sin \theta$ 
 in the formula for $\hat {\boldsymbol J}[X]$, see appendix \ref{app:e}, and the $\theta$-dependence in \eqref{hPsiasmpt}, which give an 
 additional factor vanishing at $\theta = 0$ in the case (happening only for $|s|\neq 0$) in which the potential $\hat V$ \eqref{hV} has a pole at $\theta = 0$.
 Finally, the contributions from (c) precisely cancel each other because, ${\pounds}_T \hat {\boldsymbol J}[X]= 0$ for a real QNM frequency.
 
 Computing the fluxes from case (a) using asymptotic formulas \eqref{hPsiasmpt} for our QNM $\hat \Psi$ as well as 
 Kruskal coordinate components of 
 $\hat g_{ab}$  near $\hat {\mathcal H}^+_+$ and  near $\hat {\mathcal H}^+_c$
 yields the condition:
 \ben
 |A_c|^2(\omega -bm)(\omega-m\hat \omega_c) +  |A_+|^2(\omega -bm)(\omega-m\hat \omega_+)=0.
 \een
We chose any $b$ so that we can cancel the $(\omega -bm)$-factor. Assuming momentarily that neither $A_+$ nor $A_c$ from the 
asymptotic formula vanish, it follows that $\omega-m\hat \omega_c$ and $\omega-m\hat \omega_+$ have an opposite sign. 
This implies at once that 
$
\hat \omega_+ < \omega/m < \hat \omega_c, 
$
which is precisely the bound for real QNM frequencies in Kerr-deSitter obtained by \cite{CTdC}.
Furthermore, as argued by \cite{CTdC}, if $A_+$ or $A_c$ vanish, then in fact $\hat \Psi = 0$. Thus, the dual geometry 
gives a geometric understanding of the bound by \cite{CTdC}.

\section{Conclusions and outlook}

In this paper, we have given an explanation of the spectral mass symmetries of the QNM spectrum of the spin-$s$
Teukolsky equation on Kerr-dS. Our explanation was based on a connection with the manifestly symmetric 
Painlevé VI equation via the isomonodromic flow. While this connection holds true quite generally, the special 
feature of the black hole problem is that the parameters of the equations are apparently non-generic:  
the potential in the Schrödinger form of the radial Teukolsky equation becomes {\em real} after the transformation. 

We have also exploited the ``dual geometry'' of Kerr-dS after the mass symmetry transformation. Also this dual 
geometry is special, in the sense that it has bifurcate Killing horizons. Furthermore, the potential in the wave equation 
after the symmetry transformation is real and positive definite. These features lead to a geometric interpretation of a 
bound due to \cite{CTdC} on possible QNM frequencies in Kerr-dS. 

Numerical investigations \cite{Irmert}, \cite{Yoshida:2010zzb} have so far not shown evidence of unstable QNMs 
in Kerr-dS though \cite{Irmert} has given evidence of unstable modes in a parameter range where the spacetime 
represents a naked singularity. In fact, a version of the isomonodromy approach can be used to establish 
the absence of unstable QNMs for Kerr-dS black holes with small $M/L$ (small cosmological constant) \cite{hollandsunpublished}, but 
we are not aware of a method capable of exploring the full subextremal parameter range. 

It is natural to ask whether the isomonodromy method has other applications in black hole perturbation theory. Such connections 
have in fact been considered already e.g., by \cite{daCunha:2015ana}, \cite{CarneirodaCunha:2015qln}, \cite{NMLC}, \cite{Bonelli:2021uvf}.
These works mostly aim at exploiting recent progress \cite{Gamayun:2012ma},
\cite{Gamayun:2013auu} related to series representations of the isomonodromic $\tau_{\rm VI}$-function---very closely related 
to the $\Sigma_{\rm VI}$-function used in this work---in order to 
find series expansions for scattering coefficients or related spectral data. 
It would be interesting to see if such methods can say something about the mode stability problem.

\medskip
{\bf Acknowledgement:} S.H. is grateful to Nagoya U. and SISSA for support and hospitality. He thanks G. Bonelli for discussions regarding \cite{Bonelli:2021uvf}.

\appendix

\section{Expressions for $F,G,H,D, P$}
\label{app:B}
We recall the notations $u=\cos \theta, \alpha = a^2/L^2$. In order to find $F$, $G$, $H$ in the dual metric \eqref{dsWKdS} we 
perform the mass symmetries on the radial Teukolsky equation, and the differential symmetries on the angular Teukolsky equations, 
as described in the main text. This results in a different radial equation, of the schematic form 
\ben
\biggl[\partial_r (\Delta_r \partial_r) + \frac{1}{\sin \theta} \partial_\theta \left( \sin \theta (1+\alpha \cos^2 \theta) \partial_\theta \right) + V_{\rm final}(r) \biggr] \hat \Psi = 0,
\een
where $\hat \Psi = \hat R(r) \hat S(\theta) e^{-i\omega t+im\phi}$. The potential $V_{\rm final}(r)$ can be found as {\tt Vr23final} in \cite{nb}, 
where one should identify $r_+=${\tt rp}, $r_-=${\tt rm}, $r_c=${\tt rc}, $r_o=${\tt ro}, $\alpha=${\tt al} $\omega=${\tt w}, $m=${\tt m}. 
From this potential, one then extracts all terms involving $\omega^2, m\omega, m^2$, and replaces these 
by the derivatives $-\partial_t^2, \partial_t \partial_\phi, -\partial_\phi^2$, respectively. 
The corresponding terms are found in \cite{nb} as {\tt Vr23tt}, {\tt Vr23pt}, {\tt Vr23pp}, respectively. 
Any terms not containing $m,\omega$ are collected in {\tt Vr230} in \cite{nb}.
Then one has a wave equation with potential {\tt Vr230} for $\hat \Psi$. 
This equation is similar, but not identical, to \cite[Eq. 5.15]{Umetsu}.
  
Covariantizing this equation as in  \eqref{hKG} is in principle a straightforward task involving the 
2-dimensional matrix {\tt g23} in \cite{nb}. The subsequent computations in \cite{nb} then deliver 
$F=${\tt Vr23tt}, $H=${\tt Vr23tt}, and $G=-(1/2)${\tt Vr23pt}, 
in the dual metric \eqref{dsWKdS}. For the dual surface gravities \eqref{kappadef}, 
the expressions {\tt kapc23}, {\tt kapp23} in \cite{nb}, which coincide with $\hat \kappa_c, \hat \kappa_+$ as in \eqref{hkappdef}
up to a sign disappearing when taking absolute values in  \eqref{kappadef}.

For $D=G^2+FH$, we find $-${\tt gdet23} in \cite{nb}, which is 
\begin{equation}
\label{Dinv}
\begin{aligned}
D^{-1} = &-\Biggl[
\alpha^2
(r-r_c)(r-r_-)(r-r_+)\\
&\times
\Bigl(
2r_c^{3}
+2r_-^{3}
+7r_-^{2}r_+
+7r_-r_+^{2}
+2r_+^{3}
+7r_c^{2}(r_-+r_+)
+r_c(
7r_-^{2}
+16r_-r_+
+7r_+^{2}
)
\Bigr)^{2}
(1+\alpha u^{2})
\Biggr]\\
&\times
\Biggl\{
(1+\alpha)^{3}
\Biggl[
r_c^{3}
+3r_c^{2}(r_-+r_+)
+(r_-+r_+)^{3}
+r_c(
3r_-^{2}
+7r_-r_+
+3r_+^{2}
)\\
&
-r\Bigl(
r_c^{2}
+r_-^{2}
+3r_-r_+
+r_+^{2}
+3r_c(r_-+r_+)
\Bigr)
\Biggr]\\
&\times
\Biggl[
a\,\alpha
\Bigl(
-r_c^{2}(r_-+r_+)
-r_-r_+(r_-+r_+)
+2r(r_c+r_-+r_+)^{2}
-r_c(r_-^{2}+4r_-r_++r_+^{2})
\Bigr)(u-1)\\
&+2a^{3}(r-r_o)
(1+\alpha u)
\Biggr]^{2}
\Biggr\}^{-1}.
\end{aligned}
\end{equation}
\begin{lem}
\label{lem1}
There exists $r_c'>r_c$ and $D'$ such that $D^{-1} = -(r-r_c)(r-r_-)(r-r_+) D^{\prime -1}$, where 
$\infty>D^{\prime -1} > 0$ for all $r\le r_c'$. 
\end{lem}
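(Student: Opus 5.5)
Since \eqref{Dinv} is fully explicit, I would simply \emph{define}
\[
D^{\prime -1} := \frac{\alpha^2\, Q^2\,(1+\alpha u^2)}{(1+\alpha)^3\, \ell(r)\, N(r,u)^2},
\]
where $Q$ is the big constant (in $r$) cubic in $r_c,r_\pm$ appearing squared in the numerator, $\ell(r)$ is the affine function of $r$ in the first square bracket of the denominator, and $N(r,u)$ is the expression in the last squared bracket. With this choice, \eqref{Dinv} reads literally $D^{-1} = -(r-r_c)(r-r_-)(r-r_+)D^{\prime -1}$. The lemma then reduces to showing that each factor is finite and of definite sign for $u\in[-1,1]$ and $r$ in the relevant range. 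I read that range as $r_o<r\le r_c'$, which is the domain where the lemma is used in the main text. Throughout I take $a>0$, so $\alpha>0$; the case $a=0$ is degenerate.

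\textbf{The easy factors.} The factors $\alpha^2$, $1+\alpha u^2$ and $(1+\alpha)^3$ are strictly positive. $Q$ is a sum of monomials with positive coefficients in the positive roots $r_c,r_\pm$, so $Q^2>0$.

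For $\ell$, write $\ell(r)=A-rB$ with $B=r_c^2+r_-^2+3r_-r_++r_+^2+3r_c(r_-+r_+)>0$. Thus $\ell$ is decreasing in $r$. A direct expansion gives
\[
\ell(r_c)=(r_-+r_+)^3+2r_c(r_-+r_+)^2>0,
\]
so $\ell>0$ for all $r\le r_c$. By continuity, $\ell>0$ on $r\le r_c'$ for some $r_c'>r_c$.

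\textbf{The factor $N$.} It remains to show $N\neq 0$, which gives $N^2>0$. The key observation is that $N$ is affine in $u$:
\[
N=c_1(r)(u-1)+c_2(r)(1+\alpha u).
\]
Hence it suffices to show that $N(r,1)$ and $N(r,-1)$ are nonzero and of the same sign.

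At $u=1$ we have $N=2a^3(1+\alpha)(r-r_o)>0$ for $r>r_o$.

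At $u=-1$, $N(r,-1)=-2a\alpha P(r)+2a^3(1-\alpha)(r-r_o)$, where $P$ is the affine function of $r$ multiplying $a\alpha(u-1)$. So $N(r,-1)$ is affine in $r$, and it suffices to check positivity at the endpoints $r=r_o$ and $r=r_c$; positivity on $[r_o,r_c]$ follows, and strict positivity persists to $r\le r_c'$ by continuity.
\begin{itemize}
\item At $r=r_o=-(r_c+r_-+r_+)$ every term of $P(r_o)$ is negative, so $N(r_o,-1)=-2a\alpha P(r_o)>0$.
\item At $r=r_c$ I will eliminate $a$ and $L$ in favour of the roots. Comparing coefficients in \eqref{Deltardef} gives $a^2L^2=-r_+r_-r_cr_o$ and $L^2-a^2=-e_2$, where $e_2$ is the second elementary symmetric polynomial of the four roots. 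From these, $\alpha$ and $a^2$ become rational functions of $r_c,r_\pm$. Dividing $N(r_c,-1)$ by the positive factor $2a\alpha$, the claim becomes a polynomial inequality in $0<r_-<r_+<r_c$.
\end{itemize}

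\textbf{Main obstacle.} I expect the last step, the sign of $N(r_c,-1)$, to be the only real difficulty. My plan is to factor the resulting polynomial, or rewrite it in the variables $r_+-r_-$ and $r_c-r_+$ (both positive), hoping that all coefficients come out positive. If that fails, I would fall back to a computer-algebra check in the spirit of \cite{nb}.

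Once that sign is established, all factors of $D^{\prime -1}$ are positive and finite on $r_o<r\le r_c'$, $u\in[-1,1]$, which proves the lemma.
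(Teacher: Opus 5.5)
Your argument is the paper's argument. You use the same three-factor split of \eqref{Dinv} and the same observation that the affine $r$-factor is decreasing and positive at $r=r_c$. You also reduce the last factor to $u=\pm1$ by affinity in $u$, exactly as the paper does: it phrases this as ``if the $(1-u)$-coefficient is non-negative we are done, otherwise the minimum is at $u=-1$''. Your endpoint evaluation in $r$ simply replaces the paper's chain of lower bounds using $r\le r_c$.

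The only thing missing is the computation you deferred, and it is not an obstacle. After dividing by $2a\alpha$ you need only $a^2(1-\alpha)/\alpha=L^2-a^2$, which by \eqref{Lform} equals $r_c^2+r_+^2+r_-^2+r_+r_c+r_-r_c+r_+r_-$. Expanding then gives
\[
N(r_c,-1)=2a\alpha\,(r_++r_-)^2\,(2r_c+r_++r_-)>0 .
\]
So you do not need to eliminate $a$ via $a^2L^2=-r_+r_-r_cr_o$, nor fall back on computer algebra.

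In fact $N(r,-1)/(2a\alpha)$ has slope $-(r_c+r_++r_-)^2-(r_cr_++r_cr_-+r_+r_-)<0$ in $r$. Positivity at $r_c$ alone therefore covers all $r\le r_c$, and your check at $r=r_o$ is redundant.

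Your reading of the range as $r_o<r\le r_c'$ is the right one, and it is necessary, since $N(r_o,1)=0$. The paper's proof also implicitly uses $r-r_o>0$ in its first case.
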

\begin{proof}
We use that $r_c>r_+>r_->0$, as assumed throughout this paper. Furthermore, we recall 
$\alpha = a^2/L^2$, where $L^2$ can be expressed by \eqref{Deltardef} as
\ben
\label{Lform}
L^2 =r_c^2 + r_+^2 + r_-^2 + r_+r_c + r_-r_c + r_+r_- + a^2.
\een
We also recall $u=\cos \theta$, so that $|u| \le 1$. We consider separately the 
three factors $[...]$ in the expression \eqref{Dinv}.

(i) The first $[...]$ is equal to $(r-r_c)(r-r_+)(r-r_-)$ times a manifestly positive number. 

(ii) By inspection, the term represented by the second $[...]$ in \eqref{Dinv} can be bounded from below by 
$2r_c(r_++r_-)^2>0$ when $r=r_c$. The $r$-independent term in $[...]$ is manifestly positive, 
therefore this $[...]$ must remain positive for all $r<r_c'$ up to some $r_c'>r_c$. 

(iii) By an elementary algebraic transformation, 
the term in the third $[...]$ in \eqref{Dinv} has the same sign as the following expression $A$:
\begin{equation}
\begin{aligned}
A &= (1-u) \Bigl[
r_c^2(r_++r_-) + r_c(r_++r_-)^2 + r_-r_+(r_c+r_++r_-) -2r(r_c+r_++r_-)^2
\Bigr]\\
&+2(1+\alpha u)(r+r_c+r_++r_-)(r_c^2 + r_+^2 + r_-^2 + r_+r_c + r_-r_c + r_+r_- + a^2).
\end{aligned}
\end{equation}
Obviously, if the first term in $[...]$ in $A$ is non-negative, then $A>0$ and we are done. 
Assume therefore that the $[...]$-terms in $A$ together are negative. Then $A$ is smallest if $u=-1$, 
i.e., we have the first inequality in 
\begin{equation}
\begin{aligned}
A &\ge 2 \Bigl[
r_c^2(r_++r_-) + r_c(r_++r_-)^2 + r_-r_+(r_c+r_++r_-) -2r(r_c+r_++r_-)^2
\Bigr]\\
&+2(1-\alpha)(r+r_c+r_++r_-)(r_c^2 + r_+^2 + r_-^2 + r_+r_c + r_-r_c + r_+r_- + a^2)\\
&= 2 \Bigl[
r_c^2(r_++r_-) + r_c(r_++r_-)^2 + r_-r_+(r_c+r_++r_-) -2r(r_c+r_++r_-)^2
\Bigr]\\
&+2(r+r_c+r_++r_-)(r_c^2 + r_+^2 + r_-^2 + r_+r_c + r_-r_c + r_+r_-)\\
&\ge 2 \Bigl[
r_c^2(r_++r_-) + r_c(r_++r_-)^2 + r_-r_+(r_c+r_++r_-)\\
&+(r_c+r)(r_c+r_++r_-)^2-2r(r_c+r_++r_-)^2 -2r_c(r_cr_++r_cr_-+r_+r_-)\\
&+(r_++r_-)(r_c^2 + r_+^2 + r_-^2 + r_+r_c + r_-r_c + r_+r_-)\Bigr]\\
&= 2 \Bigl[
r_-r_+(r_++r_-) + r_c(r_++r_-)^2 +r_+r_c^2\\
&+(r_++r_-)(r_+^2 + r_-^2  + r_-r_c + r_+r_-)+(r_c-r)(r_c+r_++r_-)^2\Bigr] >0.
\end{aligned}
\end{equation}
In the second line, we used $\alpha=a^2/L^2$ and \eqref{Lform}. In the last step, we assumed that $r \le r_c$. 
So $A>0$ if $r \le r_c$, which is an open condition and hence remains valid for $r \le r_c$ for some $r'_c>r_c$.
\end{proof}

The quantity $P$ appearing in the potential $\hat V$ \eqref{hV} corresponds to $-${\tt Vr230} in \cite{nb}, since it 
is that part of $V_{\rm final}$ not containing $m^2, m\omega, \omega^2$, and hence no derivatives after undoing 
the mode decomposition. Using \eqref{Lform}, $-${\tt Vr230} in \cite{nb} may be simplified further to
\begin{equation}
\label{Potinv}
\begin{aligned}
P=&
s^2(1+\alpha) \frac{1+u}{1-u}+\frac{\alpha}{a^2 (r+r_++r_-+r_c)} \Biggl[
2 r^2 (r+r_c+r_++r_-)\\
&+s^2 \Bigl(
r_c^3+r_+^3+r_-^3+ 3r_c^2 r_- + 3r_c^2 r_+ + 3r_-^2 r_c + 3r_+^2r_c + 3 r_-^2r_+ + 3 r_+^2r_- + 7r_cr_+r_-
\Bigr)\\
&-s^2r \Bigr(
r_c^2 + r_+^2 + r_-^2 + 3r_cr_- + 3r_cr_+ + 3r_-r_+
\Bigl) 
\Biggr]+2\alpha u^2 .
\end{aligned}
\end{equation}
\begin{lem}
\label{lem2}
There exists $r_c'>r_c$ such that $P \ge 0$ for all $r_o < r\le r_c'$. 
\end{lem}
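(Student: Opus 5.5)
The plan is to split $P$ in \eqref{Potinv} into three summands and show that each is non-negative, using only $r_c>r_+>r_->0$ and $r_o=-(r_c+r_++r_-)$. The first and last summands are easy. The summand $s^2(1+\alpha)\frac{1+u}{1-u}$ is non-negative for $|u|\le 1$; it is $+\infty$ at $u=1$, which is harmless for a lower bound. The summand $2\alpha u^2$ is also non-negative. In the middle summand, the prefactor satisfies $r+r_++r_-+r_c = r-r_o>0$ for $r>r_o$, so $\alpha/[a^2(r-r_o)]>0$. Hence it suffices to show that the large bracket $[\dots]$ is strictly positive for $r_o<r\le r_c$, and then to extend to some $r_c'>r_c$ by continuity.

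Write the bracket as $2r^2(r-r_o)+s^2\,(C-rB)$. Here $C$ is the cubic expression in $r_c,r_+,r_-$ multiplying $s^2$, and $B=r_c^2+r_+^2+r_-^2+3r_cr_-+3r_cr_++3r_-r_+$. The term $2r^2(r-r_o)$ is non-negative on $r>r_o$. Since $B>0$, the linear function $r\mapsto C-rB$ is decreasing, so on $r\le r_c$ it is bounded below by its value at $r=r_c$. Expanding and cancelling the $r_c^3$, $3r_c^2r_\pm$ and $r_cr_\pm^2$ terms should give
\begin{equation*}
C-r_cB = r_+^3+r_-^3+2r_c(r_+^2+r_-^2)+3r_+r_-(r_++r_-)+4r_cr_+r_- ,
\end{equation*}
which is manifestly positive. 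I will double-check this expansion term by term. The bracket is therefore bounded below by the positive constant $s^2(C-r_cB)$ when $s\neq0$. When $s=0$ it reduces to $2r^2(r-r_o)\ge0$, which is non-negative but vanishes at $r=0$.

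Therefore $P\ge0$ for all $r_o<r\le r_c$ and all $u\in[-1,1)$. For the extension past $r_c$, the case $s\neq 0$ is easy. The bracket is continuous in $r$, strictly positive at $r=r_c$, and independent of $u$, so it stays positive on $[r_c,r_c']$ for some $r_c'>r_c$. The remaining summands of $P$ stay non-negative for every $r$. When $s=0$, the bracket $2r^2(r-r_o)$ is non-negative for all $r>r_o$, so any $r_c'$ works.

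The main obstacle is the algebraic identity for $C-r_cB$: the cancellations must leave only positive monomials. If a negative cross term survived, I would instead combine it with the positive $2r^2(r-r_o)$ term, evaluated at the worst point of the interval.
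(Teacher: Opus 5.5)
Your proposal is correct and follows essentially the paper's argument: drop the manifestly non-negative terms, bound the only sign-indefinite piece $s^2(C-rB)$ from below by its value at $r=r_c$ (your expansion of $C-r_cB$ agrees with the paper's lower bound), and extend past $r_c$ by continuity. Your separate treatment of $s=0$, where only $P\ge 0$ rather than the paper's $P>0$ holds, and your remark that the bracket is independent of $u$, make the final continuity step slightly more careful than in the paper.
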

\begin{proof}
Since $u=\cos \theta$, we have $1\pm u \ge 0$. $P$ will thus be less than or equal to the expression 
obtained by putting $r=r_c$ in the last line of  \eqref{Potinv}, since this is the only term which is not sign definite. We thereby have, for 
$r>r_o=-r_+-r_--r_c$, that
\begin{equation}
\label{Potinv1}
\begin{aligned}
P\ge& 
\frac{\alpha s^2}{a^2 (r+r_++r_-+r_c)}
 \Bigl(
r_+^3+r_-^3+2r_+^2r_c +2r_-^2r_c + 3 r_-^2r_+ + 3 r_+^2r_- + 4r_cr_+r_-
\Bigr) > 0
\end{aligned}
\end{equation}
by dropping terms that are manifestly non-negative for all $r$. Therefore, we have $P>0$ for $r_o < r \le r_c$. 
Since this is an open condition, the statement follows immediately. 
\end{proof}
\section{Coordinate expression of Noether current $\hat {\boldsymbol J}[T]$}
\label{app:e}

Using the notations of appendix \ref{app:B}, we have
\ben
\begin{split}
&\hat {\boldsymbol J}[T] =2(1+\alpha \cos^2 \theta) \Real \bigg(\partial_\theta \hat \Psi^* \partial_t \hat \Psi \bigg) \,  \sin \theta \, \ud t \wedge \ud\phi \wedge \ud r
-2\Delta_r(r) \Real\bigg(\partial_r \hat \Psi^* \partial_t \hat \Psi\bigg) \,  \sin \theta \, \ud t \wedge \ud\phi \wedge \ud\theta
\\
& +\bigg\{
F(r,\theta) |\partial_t \hat \Psi|^2 + H(r,\theta) |\partial_\phi \hat \Psi|^2 + \Delta_r(r) |\partial_r \hat \Psi|^2 + (1+\alpha \cos^2 \theta) |\partial_\theta \hat \Psi|^2  + P(r,\theta)  |\hat \Psi|^2 \bigg\}  \sin \theta \, \ud\theta \wedge \ud\phi \wedge \ud r\\
&-2\bigg[ -2G(r,\theta) |\partial_t \hat \Psi|^2 + H(r,\theta) \Real\bigg(\partial_\phi \hat \Psi^* \partial_t \hat \Psi \bigg) 
\bigg]  \sin \theta \, \ud t \wedge \ud\theta \wedge \ud r .
\end{split}
\een

\section{Geometry of Whiting's metric}
\label{app:A}
Whiting's metric \eqref{Whitingds2} associated with Kerr has the following properties.
\begin{enumerate}
\item $T^a = (\partial_t)^a$ is a null Killing VF, $\Phi^a = (\partial_\phi)^a$ is a Kiling VF with $2\pi$-periodic orbits.
\item The Einstein tensor has the form of a null fluid in the direction $T^a$, i.e. $\hat G_{ab} \propto T_a T_b$; in particular $\hat R=0$.
\item  $\hat C_{abcd}T^d = 0$, that is $\hat g_{ab}$ is of type N in the Petrov Weyl tensor classification (see e.g., \cite{waldbook}), with three times repeated principal null direction 
given by $T^a$.\footnote{In fact, the only non-zero Weyl components are $\hat C_{r\phi r\phi}, \hat C_{r\phi\theta\phi}, \hat C_{\theta\phi\theta\phi}$.
Note that Kerr is of type D in the Petrov Weyl tensor classification, and that it thereby has further non-zero Weyl components.}
\item By the Goldberg-Sachs theorem $T^a$ is tangent to a congruence of shear free null geodesics.
\item Since $T^a$ is tangent and normal to the surfaces given by $\phi=$ const. we have $T_{[a} \hat \nabla_b T_{c]} = 0$, i.e., $T^a$ is twist free.
\item Since $T^a$ is Killing, it follows that $\hat \nabla_a T^a = 0$, implying that $T^a$ is expansion free because it is geodesic by 4).
\item Using 6), $\hat \nabla^a \hat G_{ab} = 0$ and item 2) we get that $T^a \hat \nabla_a T^b = 0$. 
\item By Prop. 1 of \cite{Lewand}, there exists a spinor field $\tau^A$ such that 
$T^a = \tau^A \bar \tau^{A'}$ and such that the twistor spinor equation 
\ben
(\hat \nabla^{(A}{}_{(A'} \bar \tau_{B')})\tau^{B)} = 0
\een
holds.
\end{enumerate}
The twistor spinor $\tau^A$ may be used to construct a (complex) conformal Killing-Yano tensor via 
\ben
\hat Y_{ab}=\bar \epsilon_{A'B'} \tau_A \tau_B.
\een
As usual, this gives rise to the (complex) conformal Killing tensor $\hat K_{ab} = \hat Y_a{}^c\hat Y_{cb}$. Its real and imaginary parts give rise to constants of motion for null geodesics.


\begin{thebibliography}{10}

\bibitem{Akcay:2010vt}
S.~Akcay and R.~A.~Matzner,
``Kerr-de Sitter Universe,''
Class. Quant. Grav. \textbf{28}, 085012 (2011)

\bibitem{Andersson:2016epf}
L.~Andersson, S.~Ma, C.~Paganini and B.~F.~Whiting,
``Mode stability on the real axis,''
J. Math. Phys. \textbf{58}, 072501 (2017)

\bibitem{Bonelli:2021uvf}
G.~Bonelli, C.~Iossa, D.~P.~Lichtig and A.~Tanzini,
``Exact solution of Kerr black hole perturbations via CFT2 and instanton counting: Greybody factor, quasinormal modes, and Love numbers,''
Phys. Rev. D \textbf{105}, 044047 (2022) 

\bibitem{Borthwick:2018qsb}
J.~Borthwick,
``Maximal Kerr{\textendash}de Sitter spacetimes,''
Class. Quant. Grav. \textbf{35}, 215006 (2018) 
[erratum: Class. Quant. Grav. \textbf{39}, 219501 (2022) ]



\bibitem{daCunha:2015ana}
B.~Carneiro da Cunha and F.~Novaes,
``Kerr Scattering Coefficients via Isomonodromy,''
JHEP \textbf{11}, 144 (2015)


\bibitem{CarneirodaCunha:2015qln}
B.~Carneiro da Cunha and F.~Novaes,
``Kerr{\textendash}de Sitter greybody factors via isomonodromy,''
Phys. Rev. D \textbf{93}, 024045 (2016) 

\bibitem{Carter}
B. Carter, in {\it Les Astres Occlus,} ed. by B. DeWitt, C. M. DeWitt, (Gordon and Breach, New York, 1973)

\bibitem{CTdC}
M.~Casals and R.~T.~da Costa,
``Hidden Spectral Symmetries and Mode Stability of Subextremal Kerr(-de Sitter) Black Holes,''
Commun. Math. Phys. \textbf{394}, no.2, 797-832 (2022)

\bibitem{ChambersMoss}
C.~M.~Chambers and I.~G.~Moss,
``Stability of the Cauchy horizon in Kerr-de Sitter space-times,''
Class. Quant. Grav. \textbf{11}, 1035-1054 (1994)

\bibitem{Chrusciel:2012jk}
P.~T.~Chrusciel, J.~Lopes Costa and M.~Heusler,
``Stationary Black Holes: Uniqueness and Beyond,''
Living Rev. Rel. \textbf{15}, 7 (2012)


\bibitem{Dolan:2007mj}
S.~R.~Dolan,
``Instability of the massive Klein-Gordon field on the Kerr spacetime,''
Phys. Rev. D \textbf{76}, 084001 (2007)

\bibitem{Dold:2015cqa}
D.~Dold,
``Unstable Mode Solutions to the Klein\textendash{}Gordon Equation in Kerr-anti-de Sitter Spacetimes,''
Commun. Math. Phys. \textbf{350}, no.2, 639-697 (2017)

\bibitem{Dyatlov:2010hq}
S.~Dyatlov,
``Quasi-normal modes and exponential energy decay for the Kerr-de Sitter black hole,''
Commun. Math. Phys. \textbf{306}, 119-163 (2011)

\bibitem{Gamayun:2012ma}
O.~Gamayun, N.~Iorgov and O.~Lisovyy,
``Conformal field theory of Painlev\'e VI,''
JHEP \textbf{10}, 038 (2012)

\bibitem{Gamayun:2013auu}
O.~Gamayun, N.~Iorgov and O.~Lisovyy,
``How instanton combinatorics solves Painlev\'e VI, V and IIIs,''
J. Phys. A \textbf{46}, 335203 (2013)

\bibitem{Graf:2022fve}
O.~Graf and G.~Holzegel,
``Mode stability results for the Teukolsky equations on Kerr-anti-de Sitter spacetimes,''
Class. Quant. Grav. \textbf{40}, 045003 (2023)


\bibitem{Green:2015kur}
S.~R.~Green, S.~Hollands, A.~Ishibashi and R.~M.~Wald,
``Superradiant instabilities of asymptotically anti-de Sitter black holes,''
Class. Quant. Grav. \textbf{33}, 125022 (2016) 



\bibitem{Hatsuda:2020sbn}
Y.~Hatsuda,
``Quasinormal modes of Kerr-de Sitter black holes via the Heun function,''
Class. Quant. Grav. \textbf{38}, 025015 (2020) 

\bibitem{Hintz}
P.~Hintz,
``Mode stability and shallow quasinormal modes of Kerr-de Sitter black holes away from extremality,''
[arXiv:2112.14431 [gr-qc]].

\bibitem{Hintz:2016gwb}
P.~Hintz and A.~Vasy,
``The global non-linear stability of the Kerr-de Sitter family of black holes,''
doi:10.4310/acta.2018.v220.n1.a1
[arXiv:1606.04014 [math.DG]].

\bibitem{Hintz:2022lgf}
P.~Hintz and G.~Holzegel,
``Recent progress in~general relativity,''
doi:10.4171/icm2022/128



\bibitem{hollandsunpublished}
S. Hollands, unpublished notes

\bibitem{Irmert}
L. Irmert, {\it Numerical survey
on the stability of Kerr de-Sitter
space-times,} B.Sc. thesis, U. Leipzig (2025)

\bibitem{Jimbo:1981tov}
M.~Jimbo, T.~Miwa and K.~Ueno,
``Monodromy preserving deformation of linear ordinary differential equations with rational coefficients: I. General theory and {\ensuremath{\tau}}-function,''
Physica D \textbf{2}, 306-352 (1981) 

\bibitem{JimboMiwaII}
M.~Jimbo and T.~Miwa,
``Monodromy Preserving Deformations Of Linear Differential Equations With Rational Coefficients. 2.,''
Physica D \textbf{2}, 407-448 (1981)

\bibitem{Jimbo}
M.~Jimbo,
``Monodromy problem and the boundary condition for some Painlev\'e equations,''
Publ.\ RIMS \textbf{18}, 1137-1161 (1982)

\bibitem{Khanal:1983vb}
U.~Khanal,
``Rotating Black Hole in Asymptotic deSitter Space: Perturbation of the Space-Time with Spin Fields'',
Phys. Rev. D \textbf{28}, 1291-1297 (1983)


\bibitem{Leaver:1985ax}
E.~W.~Leaver,
``An Analytic representation for the quasi normal modes of Kerr black holes,''
Proc. Roy. Soc. Lond. A \textbf{402}, 285-298 (1985)

\bibitem{Lewand}
J. Lewandowski, ``Twistor equation in curved spacetime,'' Class. Quantum Grav. 8, L11 (1991) 

\bibitem{NMLC}
F.~Novaes, C.~Marinho, M.~Lencs\'es and M.~Casals,
``Kerr-de Sitter Quasinormal Modes via Accessory Parameter Expansion,''
JHEP \textbf{05}, 033 (2019)

\bibitem{Okamoto}
K. Okamoto, ``Studies on the Painlevé equations,'' 
Annali di Matematica pura ed applicata 146, 337–381 (1986). 

\bibitem{Press:1973zz}
W.~H.~Press and S.~A.~Teukolsky,
``Perturbations of a Rotating Black Hole. II. Dynamical Stability of the Kerr Metric,''
Astrophys. J. \textbf{185}, 649-674 (1973)

\bibitem{Regge:1957td}
T.~Regge and J.~A.~Wheeler,
``Stability of a Schwarzschild singularity,''
Phys. Rev. \textbf{108}, 1063-1069 (1957)

\bibitem{Ronveaux}
A.~Ronveux (ed.), {\it Heun's differential equation,} Oxford University Press (1995).

\bibitem{Shlapentokh-Rothman:2013hza}
Y.~Shlapentokh-Rothman,
``Quantitative Mode Stability for the Wave Equation on the Kerr Spacetime,''
Annales Henri Poincare \textbf{16}, 289-345 (2015)


\bibitem{Shlapentokh-Rothman:2020vpj}
Y.~Shlapentokh-Rothman and R.~Teixeira da Costa,
``Boundedness and decay for the Teukolsky equation on Kerr in the full subextremal range $|a|<M$: frequency space analysis,''
[arXiv:2007.07211 [gr-qc]].

\bibitem{Shlapentokh-Rothman:2013ysa}
Y.~Shlapentokh-Rothman,
``Exponentially growing finite energy solutions for the Klein-Gordon equation on sub-extremal Kerr spacetimes,''
Commun. Math. Phys. \textbf{329}, 859-891 (2014)

\bibitem{STU98}
H.~Suzuki, E.~Takasugi and H.~Umetsu,
``Perturbations of Kerr-de Sitter black hole and Heun's equations,''
Prog. Theor. Phys. \textbf{100}, 491-505 (1998)

\bibitem{Suzuki:1999nn}
H.~Suzuki, E.~Takasugi and H.~Umetsu,
``Analytic solutions of Teukolsky equation in Kerr-de Sitter and Kerr-Newman-de Sitter geometries,''
Prog. Theor. Phys. \textbf{102}, 253-272 (1999)

\bibitem{TeixeiradaCosta:2019skg}
R.~Teixeira da Costa,
``Mode stability for the Teukolsky equation on extremal and subextremal Kerr spacetimes,''
Commun. Math. Phys. \textbf{378}, 705-781 (2020)

\bibitem{Teukolsky:1973ha}
S.~A.~Teukolsky,
``Perturbations of a rotating black hole. 1. Fundamental equations for gravitational electromagnetic and neutrino field perturbations,''
Astrophys. J. \textbf{185}, 635-647 (1973)

\bibitem{Umetsu}
H.~Umetsu,
``A Conserved energy integral for perturbation equations in the Kerr-de Sitter geometry,''
Prog. Theor. Phys. \textbf{104}, 743-755 (2000)


\bibitem{waldbook}
R. M. Wald, {\it General Relativity,} U. Chicago Press (1984)

\bibitem{Whiting}
B.~F.~Whiting,
``Mode Stability of the Kerr Black Hole,''
J. Math. Phys. \textbf{30}, 1301 (1989)

\bibitem{nb}
Notebook {\tt KdSPaper.nb}, enabled by 
Wolfram Research, Inc., Mathematica, Version 14.3, Champaign, IL (2025)

\bibitem{Yoshida:2010zzb}
S.~Yoshida, N.~Uchikata and T.~Futamase,
``Quasinormal modes of Kerr-de Sitter black holes,''
Phys. Rev. D \textbf{81}, 044005 (2010)

\bibitem{Zerilli:1970se}
F.~J.~Zerilli,
``Effective potential for even parity Regge-Wheeler gravitational perturbation equations,''
Phys. Rev. Lett. \textbf{24}, 737-738 (1970)






\end{thebibliography}
\end{document}